\newif\ifcomments
\newif\iffinal
\title{On homomorphic encryption using abelian groups: \newline
Classical security analysis}
\author{Eleni Agathocleous$^1$, Vishnupriya Anupindi$^2$, Annette Bachmayr$^3$, Chloe Martindale$^4$, Rahinatou Yuh Njah Nchiwo$^5$, and Mima Stanojkovski$^6$}
\institute{$^1$CISPA Helmholtz Center for Information Security, $^2$RICAM Austrian Academy of Sciences, $^3$RWTH Aachen University, $^4$University of Bristol, $^5$Aalto University, $^6$Universit\`a di Trento 
}
\definecolor{linkcolor}{rgb}{0.65,0,0}
\definecolor{citecolor}{rgb}{0,0.65,0}
\definecolor{urlcolor}{rgb}{0,0,0.65}
\definecolor{myyellow}{rgb}{1.0, 0.75, 0.0}
\definecolor{mygreen}{rgb}{0.35, 0.71, 0.0}
\newlength\InputNewLineIndent
\spnewtheorem{heuristic}{Heuristic}{\normalfont\bfseries}{\itshape}
\problem\spnewtheorem{problem}{Problem}{\normalfont\bfseries}{\normalfont}
\newcommand\Z{\mathbb Z}
\newcommand\F{\mathbb F}
\newcommand{\LHNPKE}{\textsf{LHN-PKE}}
\DeclareMathOperator{\Hom}{Hom}
\DeclareMathOperator{\GL}{GL}
\renewcommand\phi\varphi    
\newcommand\inabox[1]{%
    \ifvmode\addvspace{2ex}\else\vspace{2ex}\fi%
    \fcolorbox{black}{white}{%
        \begin{minipage}{0.98\textwidth}%
            {#1}%
        \end{minipage}%
    }%
    \par\addvspace{2ex}
}
\newcommand{\Chloe}[1]{\textcolor{orange}{{\sf (Chloe:} {{#1})}}}
\newcommand{\Vishnu}[1]{\textcolor{Periwinkle}{{\sf (Vishnu:} {{#1})}}}
\newcommand{\Rahi}[1]{\textcolor{blue}{{\sf (Rahi:} {{#1})}}}
\newcommand{\Eleni}[1]{\textcolor{red}{{\sf (Eleni:} {{#1})}}}
\newcommand{\Mima}[1]{\textcolor{LimeGreen}{{\sf (Mima:} {{#1})}}}
\newcommand{\Annette}[1]{\textcolor{purple}{{\sf (Annette:} {{#1})}}}
\newcommand{\Chloe}[1]{}
\newcommand{\Vishnu}[1]{}
\newcommand{\Rahi}[1]{}
\newcommand{\Eleni}[1]{}
\newcommand{\Mima}[1]{}
\newcommand{\Annette}[1]{}
\newtheorem{rmk}[remark]{Remark}
\newcommand{\erase}[1]{}
\newcommand{\gen}[1]{\langle #1\rangle}
\def\namedlabel#1#2{\begingroup
   \def\@currentlabel{#2}%
   \label{#1}\endgroup
}
\begin{document}

\maketitle

\begingroup
\makeatletter
\def\@thefnmark{$\ast$}\relax
\@footnotetext{\relax
Author list in alphabetical order; see
\url{https://www.ams.org/profession/leaders/culture/CultureStatement04.pdf}.
\def\ymdtoday{\leavevmode\hbox{\the\year-\twodigits\month-\twodigits\day}}\def\twodigits#1{\ifnum#1<10 0\fi\the#1}%
\newline
Date of this document: \today.
}
\endgroup

\begin{abstract}
Leonardi and Ruiz-Lopez recently proposed an additively homomorphic public key encryption scheme based on combining group homomomorphisms with noise. \Mima{check new starting sentence and erase comments if approved}
% In \cite{Eprint}, Leonardi and Ruiz-Lopez propose an additively homomorphic public key encryption scheme whose security is expected to depend on the hardness of the \emph{learning homomorphism with noise problem} (LHN)\Rahi{I think we may need to rephrase this since it's not based on LHN}\Mima{Maybe we can say 'In \cite{Eprint}, Leonardi and Ruiz-Lopez propose an additively homomorphic public key encryption scheme based on combining group homomomorphisms with noise.'}. \Annette{I think this is a good suggestion. And if we do change the abstract we could also just leave out the [20] and say 'Leonardi and Ruiz-Lopez recently proposed...'}
Choosing parameters for their primitive requires choosing three groups $G$, $H$, and $K$.  In their paper, Leonardi and Ruiz-Lopez claim that, when $G$, $H$, and $K$ are abelian, 
then their public-key cryptosystem is not quantum secure. 
In this paper, we study security for finite abelian groups $G$, $H$, and $K$ in the classical case. 
Moreover, we study quantum attacks on instantiations with solvable~groups.
\end{abstract}

\setcounter{footnote}{0}

\subsubsection*{Acknowledgements.} We warmly thank the organizers of \emph{Women in Numbers Europe 4} for putting together this team. We also wish to thank Chris Leonardi and Andrew Sutherland for helpful conversations around the contents of \cite{Eprint} and \cite{Sutherland} respectively. In addition, we thank the three anonymous referees for their comments, which helped improve the exposition of this paper. 

The first author was supported by the European Union’s H2020 Programme under grant agreement number ERC-669891.
The second author was supported by the Austrian Science Fund, Project P34808.
The fifth author was supported by the Magnus Ehrnrooth grant 336005, the Academy of Finland project grant (351271, PI Camilla Hollanti) and by the MATINE project grant, Ministry of Defence Finland (2500M-0147, PI Camilla Hollanti).
The last author was partially supported by the Deutsche Forschungsgemeinschaft (DFG, German Research Foundation) -- Project-ID 286237555 -- TRR 195 and by the Italian program Rita Levi Montalcini for young researchers, Edition 2020.

\section*{Introduction}

Homomorphic encryption is a method of encrypting plaintext that allows users to compute directly with the ciphertext.
This has many interesting applications, including being able to engage in cloud computing without giving up your data to the owner of the cloud.
Scientifically, the premise is easy to describe:
Suppose that the plaintext and the ciphertext space both have a ring structure, and that we encrypt plaintext via a map between these spaces.
If this map is a ring homomorphism, then this describes a \emph{fully homomorphic encryption} scheme.
Creating such a ring homomorphism that describes secure encryption (requiring, for example, that such a map should be efficiently computable and hard to invert) is, however, much harder than describing its properties.
\Rahi{ In particular, jumping from only additive (or only multiplicative) to additive and multiplicative properties (enabling FHE) is a very difficult task. Despite the fact that additive (or only multiplicative) solutions existed quite early (cite it here), FHE has been an open problem for a long time.} \Mima{OK}
The closest the scientific community has come to constructing an example of fully homomorphic encryption is using maps based on (variants of) the \emph{Learning With Errors} (LWE) problem from lattice-based cryptography \cite{Regev, LPR, GSW, FHEW, TFHE}. 
However, although practical fully homomorphic encryption can be achieved from LWE-based constructions~\cite{practicalFHE}, all known such constructions are not naturally fully homomorphic:\Rahi{ mostly due to the multiplicative property that leads to a big noise growth} \Mima{OK}\Chloe{ I would prefer not to add this here, it is already later in the sentence in a more precise way. The multiplication only leads to big noise growth because it is lots of additions.}\Rahi{: Ok by me}
Decrypting a message that was encrypted using LWE relies on the `error' that was used in the encryption being small, and adding and especially multiplying encrypted messages together causes the error to grow. 
Once the error is too large, the data can no longer be decrypted, so methods such as bootstrapping need to be employed to correct this growth (see e.g.~\cite{FHEW}).
These methods, although somewhat ad hoc, may still be the best solution for providing practical fully homomorphic encryption in the near future.

In this paper, we explore an alternative approach for homomorphic encryption, 
introduced by Leonardi and Ruiz-Lopez in~\cite{Eprint}. The interested reader may find more on homomorphic encryption in \cite{FonGal}. %\Mima{I suggest removing the thank you note, we already thank them in the letter.}
The construction of Leonardi and Ruiz-Lopez, relies on something similar to the \emph{Learning Homomorphisms with Noise} (LHN) problem
introduced by Baumslag, Fazio, Nicolosi, Shpilrain, and Skeith in~\cite{Baumslag}:
Roughly speaking, this is the problem of recovering a group homomorphism from the knowledge of the images of certain elements multiplied by noise.
The focus of~\cite{Eprint} is on the difficulty of constructing post-quantum secure instantiations of their primitive,
but we believe the construction is interesting even in a classical setting.
A big advantage of the Leonardi--Ruiz-Lopez approach over a LWE approach is that the noise, 
which plays the role of the errors in LWE-based homomorphic encryption, does not grow with repeated computation. 
As such, there is no limitation on the number of additions that can be computed on encrypted data.
However, it is not clear if this construction can be extended to multiplicative homomorphic encryption. 
In general extending additive homomorphic encryption to fully homomomorphic encryption is a hard problem: LWE-based homomorphic encryption is the only known construction. 
As such, the Leonardi--Ruiz-Lopez approach is akin in some sense to the Benaloh~\cite{Benaloh} or Paillier~\cite{Paillier} cryptosystems.
In the nonabelian setting, Leonardi and Ruiz-Lopez's construction has some hopes of being post-quantum secure unlike the Benaloh or Paillier constructions, 
but as they explored already in their work this is nontrivial to instantiate, 
and our work only strengthens this claim as we show that even solvable groups may admit quantum~attacks.

$\,$\\
\noindent
Our main contributions address \textbf{finite groups} and include:
\begin{enumerate}
    \item Reducing the security of an instantiation of Leonardi and Ruiz-Lopez's public key homomorphic encryption scheme in which one of the groups is abelian (in particular, the group in which the noise lives) to the discrete logarithm problem in 2-groups (under certain plausible assumptions);
    this gives a polynomial-time classical attack if the 2-part of the relevant group is cyclic, and a practical classical attack if it is a product of a small number of cyclic groups.
    See Theorem~\ref{main-theorem} and Section~\ref{sec:limitations}.
    \item Highlighting an abelian group instantiation of Leonardi and Ruiz-Lopez's homomorphic encryption scheme where there is no known practical classical attack, 
    namely, the product of many cyclic 2-groups. This may be of interest to the community as a new example of unbounded additively homomorphic encryption.
    \item Highlighting assumptions that need to be made in order to apply any discrete-logarithm derived attack, with a view to constructing (more) examples of groups on which there is no known practical attack on Leonardi and Ruiz-Lopez's homomorphic encryption scheme.
    \item A description of a quantum attack on an instantiation with  solvable groups, under certain assumptions.
\end{enumerate}
Note that all our contributions are studying the hardness of the $\LHNPKE$ problem, given in Definition~\ref{def-LHN-PKE}, not the general LHN problem, as this is the relevant underlying ``hard problem" in Leonardi--Ruiz-Lopez encryption.

\Chloe{Do we add anything to the state-of-the-art about the noiseless case in the end? Does the noise have any relevance in a pre-quantum scenario? We should add a remark about this}

\Mima{'known generators' is not the same as (A3) or, better, if we want to use it like that we should define it. I find it very imprecise as it is.}

\Mima{I don't get the cyclic part in the table. Wasn't the whole point that we reduce it to the $2$-part and that makes things easier?}

\Mima{The solvable case 'suggests' that solvable groups are not good but we do not have a proof for this, as far as I am aware. }

\Chloe{Table needs updating now that \cref{main-theorem} has been polished a bit. Or maybe removed, if the Theorem has made things clearer?} \Mima{I am in favor of removing the table - it was a nice idea but it is too difficult to be precise at this stage of the paper}
\Chloe{removed}

The layout of this paper is as follows: In Section~\ref{sec:cryptosystem}, we recap the public key homomorphic encryption scheme proposed by Leonardi and Ruiz-Lopez in~\cite{Eprint}. 
In Section~\ref{sec:2}, we discuss some simple instantiations: The abelian case and the noiseless case. In this section we also give some basic security requirements (including some recalled from~\cite{Eprint}).
In Section~\ref{abelian-case} we describe our reduction from the abelian group instantiation of the Leonardi Ruiz-Lopez primitive to the (extended) discrete logarithm problem, under certain assumptions. 
In \cref{sectionConvert}, we describe some ways of instantiating the primitive with nonabelian groups to which our attack on abelian groups would also apply.
In Section~\ref{sec:solvable}, we describe our quantum attack on instantiations with solvable groups, under certain assumptions on how such groups would be represented.
In Section~\ref{future-work}, we outline our plans for future work.

\section{Leonardi--Ruiz-Lopez encryption}\label{sec:cryptosystem}

In this section, we describe the public key additive homomorphic encryption of Leonardi and Ruiz-Lopez~\cite[Sec.\ 5.2]{Eprint};
we will refer to this throughout this work as \emph{Leonardi--Ruiz-Lopez encryption}.
Note that this paper shows that the following encryption scheme is \emph{not} secure (even classically) in most natural abelian instantiations, see \cref{main-theorem},
and we make no claims on the security of the nonabelian case except for some warnings, see \cref{sec:limitations}, \cref{sectionConvert}, and \cref{sec:solvable}.
In the course of this work we also discovered the necessity for some extra points in the basic Leonardi--Ruiz-Lopez encryption in order to avoid security problems even in the most general case; 
our additions are marked with $\,^{\textup{new}}$ (both when they are completely new and when they are clarifications).
These points are discussed in later subsections.

Fix three finitely generated groups $G, H, K$ and probability distributions $\xi$ on $G$ and $\chi$ on $H$. This data should be chosen in such a way that operations can be performed efficiently in the groups and we can sample from both distributions efficiently. A natural choice could be, for instance, to take $G,H,K$ finite and $\xi,\chi$ to be uniform distributions. 
The groups $G,H,K$ and the distributions $\xi,\chi$ are public. 
These groups must also satisfy security assumptions \ref{it:S1}$^{\text{new}}$ and \ref{it:S3} \Mima{does this also get a $\,^{\text{new}}$?}\Chloe{It didn't have a $\,^{\text{new}}$ because it is already in \cite{Eprint}, albeit not super consistently} discussed below.
In the following sections we will mostly work with finite groups and we will always make it clear when this is the case. 

\medskip

\noindent
For the \textbf{key generation}, Alice 
\begin{itemize}
    \item chooses efficiently computable secret homomorphisms $\varphi \colon G \to H$ and $\psi \colon H \to K$  such that she can efficiently sample from $\ker(\psi)$ 
    and such that the center $\mathrm{Z}(H)$ of $H$ is not contained in $\ker(\psi)$; 
    \item chooses a positive integer $m$;
    \item samples elements $g_1,\dots, g_m \in G$ via $\xi$ and secret  elements $h_1,\dots,h_m \in \ker(\psi)$ via $\chi$;
    \item chooses an element $\tau \in \mathrm{Z}(H)\backslash \ker(\psi)$ of order $2$\footnote{The element $\tau$ having order $2$ ensures that the encryption is additive as explained in \cref{sec:homomorphic}; the construction is also viable without this requirement.} \Mima{I would rephrase the footnote as: 'The element $\tau$ having order $2$ ensures that the encryption is additive as explained in \cite[Section~5.3]{Eprint}; the construction is also viable without this requirement.}\Chloe{done}
        that satisfies \ref{it:S4}$^\text{new}$ described below.
\end{itemize}
Alice computes the public key as the set

\[ \{(g_1,\varphi(g_1)h_1), \dots, (g_m,\varphi(g_m)h_m), \tau \}.    \]
Note that, whereas the elements $g_1,\dots,g_m$ are public,  both $\varphi$ and $h_1,\dots,h_m$ are private (as are also $\psi$ and $\ker(\psi)$).

\medskip

\noindent
For \textbf{encrypting} a one-bit message $\beta \in \{0,1\}$, Bob chooses a natural number $\ell$, then samples a word $w=w_1\cdots w_{\ell}$\footnote{The sample space and method must satisfy security assumption \ref{it:S2}$^\text{new}$ described below.} over the indices $ \{1,\dots, m\}$ and using Alice's public key, he computes
\[ (g,h')=(g_{w_1}\cdots g_{w_{\ell}},\varphi(g_{w_1})h_{w_1}\cdots \varphi(g_{w_{\ell}})h_{w_{\ell}} ).
\] He then sends $(g,h)=(g,h'\tau^\beta)$ to Alice. 

\medskip

\noindent
For \textbf{decrypting} $(g,h)$, Alice computes $\nu=\psi(\varphi(g))^{-1}\cdot\psi(h) \in K$ and deduces that the message $\beta$ equals $0$ if $\nu$ equals $1_K$ (and else, $\beta$ equals $1$). 

\medskip

To see that the decryption indeed produces the correct message $\beta$, recall that $h_1,\dots, h_m$ are contained in $\ker(\psi)$.  Hence $\nu=\psi(\tau)^\beta$  and, since $\tau$ is not contained in $\ker(\psi)$, the element $\nu$  equals $1$ if and only if $\beta$ equals $0$.

For the convenience of the reader, we give a schematic summary of the data described above:
\begin{center}
\renewcommand{\arraystretch}{1.5}%
\begin{tabular}{ |c|c|c| } 
 \hline 
\ \  Public information \ \  & \ \  Alice's private information \ \  & \ \  Bob's private information \ \ \\
\hline 
$G, H, K, \xi, \chi$ & $\phi:G\rightarrow H$  &  \\ 
  \ $(g_1,\varphi(g_1)h_1), \dots, (g_m,\varphi(g_m)h_m)$  \  & $\psi:H\rightarrow K$  & $\beta\in\{0,1\}$ \\ 
$\tau\in \mathrm{Z}(H)\setminus\ker(\psi)$ &  $\ker(\psi)$&  word $w$\\
$(g,h)=(g,h'\tau^\beta) \in G \times H$ & $h_1,\ldots,h_m\in\ker(\psi)$ & \\
 \hline
\end{tabular}
\end{center}

\subsection{Homomorphic properties}\label{sec:homomorphic}

The primary selling point of Leonardi--Ruiz-Lopez encryption is that it is \emph{unbounded additive homomorphic}.
We say that an encryption function $E$ from plaintext to ciphertext space is \emph{additive} if the plaintext space admits an additive operator $+$ and, given encryptions $E(\beta)$ and $E(\tilde{\beta})$ of messages $\beta$ and $\tilde{\beta}$ respectively, one can compute a valid encryption $E(\beta + \tilde{\beta})$ of $\beta + \tilde{\beta}$ (without the knowledge of the plaintext $\beta + \tilde{\beta}$).
We say that $E$ is \emph{unbounded additive homomorphic} if such additions can be performed an unbounded number 
of times without introducing systematic decryption failures.\footnote{
The number of additions is bounded in, for example, LWE-based homomorphic encryption, where the error grows too large.
}

%The reader may have observed that
As seen above, $\tau$ having order two is not necessary for successful decryption; 
this property is needed to make the encryption additive, 
as we now recall from~\cite{Eprint}.

Write the encryptions of $\beta$ and $\tilde{\beta}$ sampled from $\{0,1\}$ as 
\[(g,h'\tau^{\beta}) = (g_{w_1}\cdots g_{w_{\ell}},\varphi(g_{w_1})h_{w_1}\cdots \varphi(g_{w_{\ell}})h_{w_{\ell}} \tau^\beta)\]
and
\[(\tilde{g},\tilde{h}'\tau^{\tilde{\beta}}) = ({g}_{\tilde{w}_1}\cdots {g}_{\tilde{w}_{\tilde{\ell}}}, \varphi({g}_{\tilde{w}_1}){h}_{\tilde{w}_1}\cdots \varphi({g}_{\tilde{w}_{\tilde{\ell}}}){h}_{\tilde{w}_{\tilde{\ell}}}  \tau^{\tilde{\beta}})\]
respectively.
Then, as $\tau$ is central in $H$ and has order 2,
we can construct a valid encryption of $\beta + \tilde{\beta}$ via the observation that
\begin{align*}
    &\varphi(g_{w_1})h_{w_1}\cdots \varphi(g_{w_{\ell}})h_{w_{\ell}} \tau^\beta
{\varphi}({g}_{\tilde{w}_1})h_{\tilde{w}_1}\cdots {\varphi}({g}_{\tilde{w}_{\tilde{\ell}}}){h}_{\tilde{w}_{\tilde{\ell}}}  \tau^{\tilde{\beta}}
\\=
&\,\varphi(g_{w_1})h_{w_1}\cdots \varphi(g_{w_{\ell}})h_{w_{\ell}}
{\varphi}({g}_{\tilde{w}_1}){h}_{\tilde{w}_1}\cdots {\varphi}({g}_{\tilde{w}_{\tilde{\ell}}}){h}_{\tilde{w}_{\tilde{\ell}}} 
\tau^{\beta + \tilde{\beta}};
\end{align*}
this encryption is given by
\[(g,h'\tau^\beta)(\tilde{g},\tilde{h}'\tau^{\tilde{\beta}})=(g\tilde{g},h'\tau^{\beta}\tilde{h'}\tau^{\tilde{\beta}})=(g\tilde{g},h'\tilde{h'}\tau^{\beta+\tilde{\beta}}).\]

\begin{rmk}
    Note that, for Leonardi--Ruiz-Lopez encryption to be fully homomorphic, it would also need to be \emph{multiplicative}: That is, at the very least, 
    given valid encryptions of $E(\beta)$ and $E(\tilde{\beta})$, we should be able to deduce a valid encryption of $\beta\tilde{\beta}$. Since we only have one group operation on the cipher text space $G\times H$, we cannot expect to find a general strategy that provides fully homomorphic encryption for arbitrary abstract groups $G$, $H$. 
    Our encryption function from plaintext to ciphertext space is
    \[E: \{0,1\} \rightarrow G\times H,\]
    where the domain can be naturally endowed with a ring structure using addition and multiplication mod 2, 
    but there is no natural 
  way of adding extra data on
     $G\times H$ that would allow us to deduce a valid encryption of $\beta\tilde{\beta}$. We stress that in the case of LWE-based homomorphic encryption, both plaintext and ciphertext spaces come equipped with a ring structure, so the equivalent of our function $E$ is generally taken to be a ring homomorphism. 
    
    The map $E$ being a ring homomorphism is, however, not always strictly necessary to deduce a valid encryption of $\beta\tilde{\beta}$.
    If in future work we were to succeed in deducing a valid encryption of $\beta \tilde{\beta}$, we expect that this will only apply to a specific instantiation of
    Leonardi--Ruiz-Lopez encryption, not one for abstract groups, where the groups provide more structure that can be exploited. 
\end{rmk}

\subsection{Remarks on Leonardi--Ruiz-Lopez encryption}

\noindent
Some remarks on the construction above: 
\begin{enumerate}[label=$(\mathrm{R}\arabic*)$]
    \item The underlying hard problem of this encryption scheme is described as the $\LHNPKE$ problem, so named as it is based on the Learning Homomorphisms with Noise problem (LHN) but is adapted to this Public Key Encryption scheme (PKE).

\begin{definition}\label{def-LHN-PKE}
    Let $G, H, K, \xi,$ and $ \chi$ be as above. We define the \emph{$\LHNPKE$} problem for $G, H, K, \xi,$ and $ \chi$ to be: Given
    $G, H, K, \xi,$ and $\chi$, 
    for any
    \begin{itemize}
        \item $\varphi$ sampled uniformly at random from $\Hom(G,H)$, 
        \item $\psi$ sampled uniformly at random from the elements of $\Hom(H,K)$ whose kernel does not contain $\mathrm{Z}(H)$,
        \item $g_1,\ldots,g_m$ sampled from $G$ using $\xi$, 
        \item $h_1,\ldots,h_m$ sampled from $\ker(\psi)$ using $\chi$, 
        \item $\tau$ sampled from the order $2$ elements of $\mathrm{Z}(H) \setminus \ker(\psi)$ using $\chi$, 
        \item $\beta$ sampled uniformly at random from $\{0,1\}$, 
        \item small $\ell$ and word $w=w_1\cdots w_{\ell}$ sampled uniformly at random from $\{1,\cdots,m\}^{\ell}$,
    \end{itemize}
    recover $\beta$ from the following information:
    \begin{itemize}
        \item $(g_1,\varphi(g_1)h_1), \ldots, (g_m,\varphi(g_m)h_m)$, 
        \item $\tau$, 
        \item $g=g_{w_1}\cdots g_{w_{\ell}}$, 
        \item $h = h'\tau^\beta = \varphi(g_{w_1})h_{w_1}\cdots \varphi(g_{w_{\ell}})h_{w_{\ell}} \tau^\beta$.
    \end{itemize}
\end{definition}

    \item In order for the encryption and decryption to work, the assumptions that $\tau$ is central or of order $2$ are not necessary. The reason we work under these assumptions is, as explained in \cref{sec:homomorphic}, that in this case, the cryptosystem is unbounded additive homomorphic. 

    \item\label{rmk:wlog-gen}
Once Alice has fixed the elements $g_1,\dots,g_m$ and determined the public key, all computations inside $G$ actually take place inside the subgroup $\langle g_1,\dots, g_m \rangle$ that is generated by $g_1,\ldots,g_m$. So for cryptanalysis, we may and will assume that $G$ is generated by $g_1,\dots,g_m$, i.e.\ that  $G=\langle g_1,\dots, g_m \rangle$.

    \item The work of~\cite{Eprint} was inspired by~\cite{Baumslag}, which introduces the Learning Homomorphisms with Noise problem in order to construct a symmetric primitive. However, the noise accumulates in the construction of~\cite{Baumslag} in a manner akin to the error growth in LWE constructions. Leonardi and Ruiz-Lopez also introduce a symmetric primitive in~\cite{Eprint}, but we focus on the PKE construction in this work.

    \item The noise consists of the elements $h_1,\dots,h_m$ that are mixed into the product $h'$ in the second component $h$ of the ciphertext. These elements are chosen to be in the kernel of $\psi$ and therefore get erased during decryption. `Being contained in the kernel' of $\psi$ can thus be thought of as an equivalent of `the error being small' in the LWE-based encryption of~\cite{Regev} or `the noise being small' in LHN-based encryption of~\cite{Baumslag}. 
    The strength of Leonardi--Ruiz-Lopez encryption is that the noise does not accumulate and will not lead to systematic decryption errors, since in the decryption process, we can erase the noise neatly by applying~$\psi$. 

\end{enumerate}

\subsection{Cryptanalysis of the $\LHNPKE$ problem in the abelian case}\label{sec:assumptions}

In this paper we show that, if $G$ and $H$ are finite and abelian, under some reasonable assumptions that we introduce here, 
the $\LHNPKE$ problem for $G$ and $H$ can be reduced to the extended discrete logarithm problem (cf.\ \cref{def:eDLP}) and membership problem (cf.\ \cref{def:membership}) in some specific abelian $2$-groups\Mima{membership problem missing}\Chloe{fixed},
which leads to a polynomial-time attack in case the 2-part of $G$ is cyclic;
the general asymptotic complexity is given in our main theorem below.
We also attack the case with finite but nonabelian $G$ and abelian $H$, again under some reasonable assumptions, in Section~\ref{sec:limitations}.

In each section that follows, we explicitly mention under which assumptions from the following list we are working. 
For  a group $\Gamma$ and $\gamma_1,\ldots,\gamma_m\in\Gamma$, we are interested in the following properties:

\begin{enumerate}[label=$(\mathrm{A}\arabic*)$]
    \item\label{it:A1} $\Gamma$ is abelian;
    \item\label{it:A2} the largest positive odd factor of the order $|\Gamma|$ of $\Gamma$ is known (or easily computable); 
    \item\label{it:A3} $\gamma_1,\ldots,\gamma_m\in \Gamma$ are such that $\Gamma=\gen{\gamma_1}\oplus\ldots\oplus\gen{\gamma_m}$;
    \item\label{it:A4} The orders $|\gamma_1|,\ldots,|\gamma_m|$ of $\gamma_1,\ldots,\gamma_m$ are known (or easily computable).
\end{enumerate}

We discuss these assumptions -- why and when they are reasonable and/or needed -- in Sections \ref{RSAetc} (for \ref{it:A2} and \ref{it:A4}) and \ref{A3justify} (for \ref{it:A3}).
We can now state the main theorem of this paper:

\begin{theorem}\label{main-theorem}
\Annette{I am fine with putting in this theorem, as long as the statement is correct, but I am not too fond of it. I would prefer to just state it for abelian groups (or maybe even just for 2-groups) and then remark afterwards that under some assumptions (e.g. efficient computation in the quotient...) this can be generalized to certain non-abelian groups . It is very hard to read out of this new Theorem 3 what the statement is for the case of abelian groups (which was our target case). We should be more careful when we say things like just work in the quotient $\bar G$ in light of that computationally that might be non-trivial and involve more assumptions. So maybe it would be cleaner to just state and prove it for abelian groups and move the non-abelian stuff to a remark.  }
\Chloe{I have rewritten this taking into account Mima's comments: the old version (with Mima's comments) is still here but commented out. 
Re: only having it for abelian or 2-groups: I think that weakens the paper a lot, as this result is much stronger. In any real-life example it's hard to come up with cases where the cryptosystem would be efficient but quotienting out etc is not. I am also afraid that if we state it only for abelian groups, people will think the nonabelian case is secure (who don't read the whole paper)} \Mima{I see both points but I also think that if someone is really interested in the paper will read it thoroughly. We could also make sure to stress reductions. In any case, we either fix the problem with (A3) or we need to cut the theorem nonetheless. The easiest way to fix it is to only talk of 2-groups and then Annette's comment applies.}

%    Let $G$ be a finite group such that
%    $\overline{G} = G/[G,G] $ satisfies \ref{it:A2} and \ref{it:A3} \Mima{for which $\gamma_i$'s? It is also not really that (A3) is needed for G but rather for its Sylow 2-subgroup -- See Section 3.4. We can, however, to have a simpler phrasing, just assume $G=\gen{g_1}\oplus\ldots\oplus\gen{g_m}$ and then we will have it also for G2.},
%    and such that the
%    natural projection $G \rightarrow \overline{G}$ is efficiently computable.
%    Let $H$ be a finite group that satisfies \ref{it:A1}, \ref{it:A2}, and \ref{it:A3} \Mima{here, instead, it seems to me that we need any cyclic decomposition of $H$, not a prescribed one}.
%    Let $K$ be an arbitrary group.
%    Let $\xi$ and $\chi$ be probability distributions on $G$ and $H$ respectively.
%    Let the Sylow $2$-subgroup of $\overline{G}$, denoted $G_2$, be given in terms of a cyclic decomposition with $m$ summands.\Mima{This $m$ is really the number of the $g_i$'s: our proof strictly depends on that. We assume WLOG that $G$ is generated by the $g_i$'s for cryptanalysis but in this statement the $g_i$'s are not involved while I think they should be.}
%    Then there is a classical algorithm to solve $\LHNPKE$ for $G,H,K,\xi,$ and $\chi$
%    with asymptotic complexity
%    \[\textnormal{polylog}(|G|) + \textnormal{polylog}(|H|) + O\left(\frac{\log_2(|G_2|)}{m}2^{m/2}\right).\]

    Let $G = \langle g_1 \rangle \oplus \cdots \oplus \langle g_m \rangle$ be a finite group\Annette{This $G$ is abelian itself!} which satisfies \ref{it:A1}, \ref{it:A2}, and \ref{it:A3}.
    %and such that $\overline{G} = G/[G,G]$ admits efficient computation,
    %the natural projection $G \rightarrow \overline{G}$ is efficiently computable, and $\overline{G}$ satisfies \ref{it:A2}.
 \Mima{I am referring to my voice message from July 24th, 09:51: the bottom line is that (A3) is an assumption on $G$ ONLY together with the public information (more precisely the $g_i$'s used to encrypt the message). The theorem does not reflect this. And it might be that other $g_i'$s in $G$ do not satisfy these assumptions and then the theorem would be wrong. See also Lemma 8 and my addition there.}\Annette{I agree with Mima}
    Let $H$ be a finite group that satisfies \ref{it:A1} and \ref{it:A2} and for which we can efficiently compute a cyclic decomposition
    of its Sylow $2$-subgroup $H_2$.\footnote{Note that if we assumed \ref{it:A3} for $H$ together with a choice of generators, computing such a decomposition for $H_{2}$ would be easy, as $H$ satisfies \ref{it:A2}; that is, this requirement for $H_2$ together with \ref{it:A2} for $H$, is weaker than \ref{it:A3} for $H$.}
    Let $K$ be an arbitrary group.
    Let $\xi$ and $\chi$ be probability distributions on $G$ and $H$ respectively.  
    Then there is a classical algorithm to solve $\LHNPKE$ for $G,H,K,\xi,$ and $\chi$
    with asymptotic complexity
    \[\textnormal{polylog}(|G|) + \textnormal{polylog}(|H|) + O\left(\frac{\log_2(|G_2|)}{m}2^{m/2}\right),\]
    where $G_2$ denotes the Sylow $2$-subgroup of $G$.

    \Eleni{Given Mima's comments, we could rephrase: Let $G = \langle g_1 \rangle \oplus \cdots \oplus \langle g_m \rangle$ and $H = \langle u_1 \rangle \oplus \cdots \oplus \langle u_k \rangle$ be finite groups that both satisfy \ref{it:A1}, \ref{it:A2} and \ref{it:A3}, with the given set of generators (\ref{it:A1} perhaps for $\overline{G}$ instead of $G$). Then, there is a classical algorithm ...} \Mima{$g_1,\ldots, g_m$ ($m$!) are important for $G$ (in the reduction you see all details of how), for $H$ we do not care what the decomposition is as long as we have one.}\Eleni{Yes, m. I wrote the decomposition for $H$ as well because I think we we need it, if we are to employ {BerVega14}.} \Mima{same as I was saying - my point was about the $g_i$'s.} 
    \Chloe{I believe that what is written, `and for which we can efficiently compute a cyclic decomposition
    of its Sylow $2$-subgroup $H_2$', is sufficient for BerVega14}
    \Mima{I don't want to make suggestions of how to rephrase it as different people have different opinions on what we should do (if we weaken the statement we can be precise, if not, it becomes a very technical statement) and I think we should talk about it.}
\end{theorem}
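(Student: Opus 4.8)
The plan is to recast the recovery of $\beta$ as a subgroup-membership question in $G\times H$, to localise that question at the prime $2$ using that $\tau$ has order $2$, and then to handle the localised problem by an extended discrete logarithm computation in the Sylow $2$-subgroup of $G$ (the only expensive step) followed by a membership test in the Sylow $2$-subgroup of $H$. Observe first that $K$, $\psi$ and the distributions $\xi,\chi$ are irrelevant to an attacker: the available data are the public pairs $(g_i,v_i)$ with $v_i:=\varphi(g_i)h_i$, the element $\tau$, and the ciphertext $(g,h)$ with $g=g_{w_1}\cdots g_{w_\ell}$ and $h=h'\tau^\beta$, where $h':=\varphi(g_{w_1})h_{w_1}\cdots\varphi(g_{w_\ell})h_{w_\ell}$.

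Put $L:=\langle(g_1,v_1),\dots,(g_m,v_m)\rangle\le G\times H$. Since $G$ and $H$ are abelian one may regroup the word, obtaining $(g,h')=\prod_i(g_i,v_i)^{a_i}\in L$, where $a_i$ counts the occurrences of $i$ in $w$; hence $(g,h)=(g,h')\cdot(1_G,\tau)^\beta$ and $(g,h)\in L$ if and only if $(1_G,\tau)^\beta\in L$. First I would check that $(1_G,\tau)\notin L$: by \ref{it:A3} a product $\prod_ig_i^{a_i}$ is trivial exactly when $|g_i|\mid a_i$ for every $i$, and then $\prod_iv_i^{a_i}=\prod_ih_i^{a_i}$ because $\varphi(g_i)^{|g_i|}=1$; since each $h_i$ lies in $\ker\psi$ so does this product, whereas $\tau\notin\ker\psi$. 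Therefore $\beta=0\iff(g,h)\in L$, i.e.\ the problem is exactly the membership problem of \cref{def:membership} for $L$.

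Next I would localise at $2$. Write $G=G_2\oplus G_{2'}$ and $H=H_2\oplus H_{2'}$ for the Sylow decompositions and let $\pi$ denote the projection $G\times H\to G_2\times H_2$. As $L$ is a subgroup of a finite abelian group, $L=(L\cap(G_2\times H_2))\oplus(L\cap(G_{2'}\times H_{2'}))$ and $\pi(L)=L\cap(G_2\times H_2)$, so $(g,h)\in L$ if and only if $\pi(g,h)\in\pi(L)$ and $\pi_{2'}(g,h)\in\pi_{2'}(L)$; but $\tau\in H_2$ gives $\pi_{2'}(g,h)=\pi_{2'}(g,h')\in\pi_{2'}(L)$ for free, whence
\[\beta=0\ \iff\ \pi(g,h)\in\pi(L).\]
The map $\pi$ is efficiently computable: \ref{it:A2} provides the odd parts $o_G,o_H$ of $|G|,|H|$; the $2$-adic valuations $e_i$ of $|g_i|$ are obtained by repeatedly squaring $g_i^{o_G}$, whence $|G_2|=\prod_i2^{e_i}$ by \ref{it:A3} and $\pi|_G$ is the exponentiation $x\mapsto x^{o_G\,t}$ with $t\equiv o_G^{-1}\pmod{|G_2|}$; on $H$ one uses $o_H$ together with $|H_2|$, read off from the assumed cyclic decomposition of $H_2$. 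All of this costs $\mathrm{polylog}(|G|)+\mathrm{polylog}(|H|)$; note in particular that no assumption of type \ref{it:A4} is required.

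It remains to decide whether $\pi(g,h)\in\pi(L)=\langle(\pi(g_i),\pi(v_i))\rangle_i$, where by the $2$-part of \ref{it:A3} we have $G_2=\langle\pi(g_1)\rangle\oplus\cdots\oplus\langle\pi(g_m)\rangle$ with $|\pi(g_i)|=2^{e_i}$. First solve the extended discrete logarithm problem of \cref{def:eDLP} in $G_2$: find $(a_1^\ast,\dots,a_m^\ast)$ with $\prod_i\pi(g_i)^{a_i^\ast}=\pi(g)$. The plan for this is a Pohlig--Hellman-style $2$-adic descent, which at each of its $\max_ie_i$ levels reduces to solving an $\F_2$-linear system in at most $m$ unknowns inside the $2$-torsion of $G_2$ by a meet-in-the-middle search; a short bookkeeping argument then bounds the total cost by $O\!\big(\tfrac{\log_2|G_2|}{m}\,2^{m/2}\big)$. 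This step is the main obstacle and supplies the dominant term; it is a genuine discrete logarithm, because \ref{it:A3} only asserts the existence of the decomposition of $G_2$, not an effective coordinate system on it. Given $(a_i^\ast)$, the full solution set of $\prod_i\pi(g_i)^{a_i}=\pi(g)$ is $a_i^\ast+2^{e_i}\Z$, so
\[\pi(g,h)\in\pi(L)\ \iff\ \pi(h)\,{\textstyle\prod_i}\,\pi(v_i)^{-a_i^\ast}\in N_2,\qquad N_2:=\langle\,\pi(v_i)^{2^{e_i}}:1\le i\le m\,\rangle\le H_2.\]
Because we possess an explicit cyclic decomposition of $H_2$, this last membership test reduces to Hermite/Smith normal form linear algebra over $\Z$ and runs in $\mathrm{polylog}(|H|)$. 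Outputting $\beta=0$ precisely when the test succeeds yields an algorithm of the claimed complexity.
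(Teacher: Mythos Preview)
Your proof is correct and follows essentially the same route as the paper: reduce to the Sylow $2$-parts, solve the eDLP in $G_2$, and then test membership in a subgroup of $H_2$ generated by the elements $\pi(v_i)^{2^{e_i}}$. The only real difference is one of packaging. You frame the whole problem upfront as membership of $(g,h)$ in the subgroup $L=\langle(g_i,v_i)\rangle\le G\times H$ and then localise $L$ at $2$; the paper instead separates this into two reduction lemmas (\cref{result_abelian} passes to the $2$-parts by raising to the odd $q$-th power rather than by the honest Sylow projection you use, and \cref{lem:h_i<g_i} does the eDLP-plus-membership step). Your argument that $(1_G,\tau)\notin L$ is exactly the observation in the proof of \cref{lem:h_i<g_i} that $M=\langle\ell_i^{|g_i|}\rangle\subseteq\ker\psi$ cannot contain $\tau$, and your $N_2$ is the $2$-localised version of that $M$. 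For the complexities the paper simply cites \cite[Cor.~1]{Sutherland} for the eDLP in $G_2$ and \cite[Lemma~3(a)]{BerVega14} for membership in $H_2$, whereas you sketch a Pohlig--Hellman/meet-in-the-middle descent and an HNF computation; these are the same bounds, and citing the literature would tighten your write-up.
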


\noindent
We will give the proof of this theorem in Section~\ref{main-theorem-proof}.

Note in particular that if $m=1$, that is if $G$ is cyclic, this attack  algorithm is  polynomial-time (in fact, as we will see, it is sufficient for $G_2$ to be cyclic).
In Section~\ref{sec:limitations}, we discuss how to attack instances with nonabelian $G$, assuming that we can work with the abelianization $G/[G,G]$ of $G$.
In Section~\ref{finiteabelianquantum}, we also briefly discuss a polynomial-time quantum algorithm, complementing the existing literature for the torsion-free case~\cite{Eprint}.
In Section~\ref{sectionConvert}, we discuss instantiations in which $H$ is not abelian (i.e. does not satisfy \ref{it:A1}) but which reduce to the case in which $H$ is abelian.
In Section~\ref{sec:solvable}, we discuss how one could develop a quantum algorithm for the solvable case.

\section{Simple instantiations and security}\label{sec:2}

In this section, we describe some simple instantiations of Leonardi--Ruiz-Lopez encryption. 
The abelian case is a central focus of this paper as it is much simpler to describe than the general case;
the description below is for the reader who wishes only to understand the abelian case.
We also describe the noiseless case in order to highlight the role that the noise plays in the encryption.
Finally we discuss the requirements on the setup parameters of Leonardi--Ruiz-Lopez encryption in order to achieve security against some naive classical attacks, 
concluding this section with a list of properties that the groups must have in order to avoid the trivial classical attacks outlined in this section.

\subsection{The abelian case} \label{sec:abelian}
If $H$ is abelian, we can rewrite %the message 
$(g,h)= (g_{w_1}\cdots g_{w_{\ell}},\varphi(g_{w_1})h_{w_1}\cdots \varphi(g_{w_{\ell}})h_{w_{\ell}}\cdot\tau^\beta )$ as
\[ (g,h)= (g_{w_1}\cdots g_{w_{\ell}},\varphi(g_{w_1})\cdots \varphi(g_{w_{\ell}})h_{w_1}\cdots h_{w_{\ell}}\cdot\tau^\beta )=(g,\varphi(g)h_{w_1}\cdots h_{w_{\ell}}\cdot\tau^\beta).\] 
If both $G$ and $H$ are abelian, it makes sense to switch to the following notation: 
Instead of choosing indices $w_1,\dots,w_{\ell}$, Bob just chooses non-negative integers $r_1,\dots,r_m$ and encrypts $\beta$ to 
\[ (g,h)= (g_1^{r_1}\cdots g_m^{r_m},\varphi(g_1^{r_1})\cdots \varphi(g_m^{r_m})h_1^{r_1}\cdots h_m^{r_m}\cdot\tau^\beta )=(g,\varphi(g)h_1^{r_1}\cdots h_m^{r_m}\cdot\tau^\beta).\] 
%\Vishnu{last term to $(g, \varphi(g)h'\tau^{\beta})$}
This system has been claimed to not be quantum secure in \cite[Section~8.2]{Eprint}, cf.\ also \cref{sec:LRL-cryptanalysis}, while we discussed security in the classical sense in \cref{sec:assumptions}. Moreover, in \cref{sectionConvert} 
we describe instantiations of the $\LHNPKE$ problem in which $G$ and $H$ are nonabelian but the security reduces to the case in which they are.

\subsection{The noiseless case}\label{sec:noiseless}
Let us assume that $h_1=h_2=\dots=h_m=1$. Then the public key consists of all pairs $(g_i,\varphi(g_i))$ together with $\tau$ and Bob would encrypt the message $\beta \in \{0,1\}$ to  
\[ (g,h)=(g_{w_1}\cdots g_{w_{\ell}},\varphi(g_{w_1})\cdots \varphi(g_{w_{\ell}})\tau^\beta)=(g,\varphi(g)\tau^\beta).
\] 
Let Eve be an attacker who is aware of the fact that Alice decided to work in a noiseless setting. Then Eve knows all $g_i$'s as well as their images $\varphi(g_i)$ from the public key. If she can write $g$ as a product in $g_1,\dots,g_m$, then she can compute $\varphi(g)$. Knowing $\tau$ from the public key, Eve can then decrypt $(g,h)=(g,\varphi(g)\tau^\beta)$. Note that even if Eve did not use the same word $w_1\cdots w_{\ell}$ as Bob to write $g$ as a product in $g_1,\dots,g_m$, she would nonetheless obtain the correct value of $\varphi(g)$. 

Of course, finding such a word might still be a hard problem, even if $m=1$. For example, if $G$ is the multiplicative group of a finite field and $g_1$ is a generator, finding a word in $g_1$ defining $g$ is the same as solving the discrete logarithm problem, which is known to be hard for classical computers (though there are quantum algorithms to solve it, see \cite{Shor_1, Shor_2}). Alice, on the other hand, will probably have a closed form describing $\varphi$ that does not require to write elements as products in $g_1,\dots,g_m$ when she applies $\varphi$ in the decryption process (and similarly for $\psi$). In the $m=1$ example above, she would choose $\varphi$ to take every element to a certain power. 

For attacking the cryptosystem in the general case, a possible strategy is to construct attacks that reduce to the noiseless case. We will come back to such attacks in \cref{sec:security}.

\subsection{Security}\label{sec:security}

In all that follows, let $\lambda$ be the security parameter.\footnote{
Typically, we want any computations undertaken by the user to have complexity that is polynomial in $\lambda$, and an attacker who attempts to decrypt by guessing any unknowns should only succeed with probability at most $2^{-\lambda}$.}

\subsubsection{Many homomorphisms}
First of all, note that an attacker who can guess both $\varphi$ and $\psi$ can decrypt in the same way Alice does. 
Denote the set of all possible choices for $\psi$ by
\[\operatorname{Hom}(H,K)^- = \{ \psi \in \operatorname{Hom}(H,K) :
\mathrm{Z}(H) \not\subseteq \ker(\psi)\}.\]
To avert brute force attacks, the groups $G,H,K$ should be chosen in such a way that $\operatorname{Hom}(G,H)$ and $\operatorname{Hom}(H,K)^-$ are of size at least $\Theta(2^{\lambda})$,\footnote{
We are using Bachmann-Landau notation for complexity, see for example Section 1.2.11.1 of~\cite{Knuth}.
} 
and $\varphi$ and $\psi$ should be sampled uniformly at random from
$\operatorname{Hom}(G,H)$ and $\operatorname{Hom}(H,K)^-$ respectively. 
This ensures that if an attacker guesses $\varphi$ she succeeds with probability $2^{-\lambda}$, and similarly for $\psi$.

\subsubsection{Words}
As we already saw in the noiseless case, there are links between the $\LHNPKE$ problem and the ability of an attacker to write $g$ as an expression in the generators $g_1,\ldots,g_m$.
Assume for instance that Eve wants to decrypt the cyphertext $(g,h)$. She knows that $g$ is a product in $g_1,\dots,g_m$ and recall that $g_1,\dots,g_m$ are public. 
If she knows the exact expression of $g$ as a product in $g_1,\dots,g_m$ that Bob used in the encryption process, then she can compute $h'$ from the public key, erase it from $h=h'\tau^\beta$, and recover the message $\beta$. 
It is important to note that other than in the noiseless case, it is in general not enough to find \textit{any} expression of $g$ as a product in $g_1,\dots,g_m$, because that product will in general not produce the correct term $h'$ yielding to a different accumulation of the noise. 
That is, the attacker needs to recover the \emph{correct} word $w$, not just any expression of $g$ in $g_1\ldots,g_m$.
To see this, recall that $(g,h)=(g,h'\tau^\beta)$ with
$$(g,h')= (g_{w_1}\cdots g_{w_{\ell}},\varphi(g_{w_1})h_{w_1}\cdots \varphi(g_{w_{\ell}})h_{w_{\ell}})$$ and assume that the attacker found some word $\tilde w_1\dots \tilde w_{\tilde \ell}$ such that $g=g_{\tilde w_1}\cdots g_{\tilde w_{\tilde \ell}}$. Using the public elements $(g_1,\varphi(g_1)h_1),\dots,(g_m,\varphi(g_m)h_m)$, it is then possible to compute an element $
\tilde h'=\varphi(g_{\tilde w_1})h_{\tilde w_1}\cdots \varphi(g_{\tilde w_{\tilde \ell}})h_{\tilde w_{\tilde \ell}}$ but in general $\tilde h'\neq h'$ and the knowledge of $\tilde h'$ together with $h$ does not help to compute $\tau^\beta$. 
In the cryptanalysis we carry out in \cref{ab-groups} for finite abelian $G$, $H$, and $K$, we give a reduction of $\LHNPKE$ to the extended discrete logarithm problem for finite $2$-groups; our reduction circumnavigates the issue of finding the correct word.

\subsubsection{An attack on instances with few normal subgroups}

The idea behind the following attack is to replace Alice's secret $\psi:H\rightarrow K$ with some new $\bar \psi:H\rightarrow L$ erasing the noise without erasing~$\tau^\beta$ (for instance $L$ could be a quotient of $H$, as described below). 
A similar attack was also described in Section 7.2 of \cite{Eprint}. 
Assume that Eve knows a normal subgroup $N$ of $H$ that contains all elements $h_1,\dots,h_m$ but does not contain $\tau$. She can then define $\bar \psi\colon H \to H/N$ as the natural projection and by applying $\bar \psi$ to all second coordinates of the elements $(g_i,\varphi(g_i)h_i)$ in the public key and to the second coordinate of the encrypted message $(g,h)$ she can switch to the noiseless case; cf.\ \cref{sec:noiseless}. %In particular, she can decrypt any ciphertext if she can solve the eDLP in $G$. 
We deduce in particular, that there should be at least $\Theta(2^{\lambda})$ normal subgroups in $H$, 
so that if an attacker guesses $\ker(\psi)$ she succeeds with probability $2^{-\lambda}$.

\subsubsection{An attack on instances with weak normal subgroups}

Now suppose that an attacker can find a normal subgroup $N$ of $H$ that contains $\varphi(g_i)h_i$ for all $i=1,\dots,m$ but does not contain $\tau$ (note that these elements are all public). 
If efficiently computable, she can then directly apply the projection $H\to H/N$ to the second coordinate in the encrypted message $(g,h)$ and can deduce that $\beta$ equals zero if and only if she obtained the neutral element in $H/N$. 
To avoid such an attack, after the key generation process,  Alice must check (assuming that she is capable of it) whether the normal closure of $\langle \varphi(g_1)h_1,\dots,\varphi(g_m)h_m \rangle$ contains $\tau$ and, if it doesn't, she should choose a different key. 
For cryptanalysis, we may thus assume that $H$ equals the normal closure of $\langle \varphi(g_1)h_1,\dots,\varphi(g_m)h_m \rangle$ (as otherwise, we can just work in this smaller group). In particular, if $H$ is abelian, we may assume that $H=\varphi(G)\ker\psi$. 
%\Vishnu{Depending on the outcome of the discussion on Prop 5, we can add a line about the subgroup membership of $\tau$ in the normal closure?}\Mima{maybe it is not that necessary after all} \Vishnu{Ok.}

\subsubsection{A summary of the discussed security assumptions}\label{sec:security-assumptions}

We conclude this section with a list of necessary properties for security deduced from the list of naive attacks above:

\begin{enumerate}[label=$(\mathrm{S}\arabic*)$]
    \item\label{it:S1} $\operatorname{Hom}(G,H)$ and $\operatorname{Hom}(H,K)^-$ are of size exponential in the security parameter;
    % \item the implementation for this cryptosystem does not use unique words to describe elements in terms of their generators; \Mima{This is now in contradiction with what we do in the abelian case or maybe not exactly, but we should explain better what we mean $ri$ vs $r_i\bmod |g_i|$} \Annette{we could just rephrase and say that there need to be many different ways to write group elements in $G$ as products and that Alice and Bob cannot use some canonical/unique presentation}
    \item\label{it:S2} finding the precise word $w$ used to express $g$ as a product in the $g_i$'s in the encryption phase has complexity that is exponential in the security parameter;
    \item\label{it:S3} the number of normal subgroups in H is exponential in the security parameter;
    \item\label{it:S4} the normal closure of $\langle \varphi(g_1)h_1,\dots,\varphi(g_m)h_m \rangle$ contains $\tau$.
\end{enumerate}

\section{Cryptanalysis in the finite abelian case}\label{abelian-case} 

In this section, we prove~\cref{main-theorem} via a series of reductions of the $\LHNPKE$ problem.
Recall that~\cref{main-theorem} assumes that $G$ and $H$ are finite and abelian (in addition to a few other assumptions).
In Section~\ref{sec:Habelian}, we will show that this can be taken one step further by giving a reduction from $G$ being finite and $H$ being finite and abelian to both $G$ and $H$ being finite and abelian.
In Section~\ref{sec:reduce_2}, we will give a reduction from $G$ and $H$ being finite and abelian to $G$ and $H$ being finite abelian 2-groups (under some reasonable assumptions).
In Section~\ref{sec:attack2gps}, we will give a reduction from $\LHNPKE$ for finite abelian $G$ and $H$ to the eDLP problem\footnote{See \cref{def:eDLP} for a formal definition.} 
for $G$ and the generator membership problem\footnote{See \cref{def:membership} for a formal definition.} 
for $H$ (under some reasonable assumptions).

In \cref{main-theorem-proof} we give the proof of~\cref{main-theorem} and in \cref{sec:limitations} we discuss the genericity and limitations of the assumptions made in \cref{sec:assumptions}, as well as the impact of the reductions made. 

Leonardi and Ruiz-Lopez~\cite{Eprint} dismissed the abelian instantiation due to an argument that there should exist a polynomial-time quantum algorithm for the $\LHNPKE$ problem;
the reduction is more complex than is suggested in~\cite{Eprint} but their statement is true as we show in \cref{sec:LRL-cryptanalysis}. 
Nevertheless, the unbounded homomorphic property of the proposed cryptosystem is sufficiently powerful that a classically secure construction would also be of great interest to the cryptographic community.

\subsection{A reduction of $\LHNPKE$ with abelian noise to abelian groups}\label{sec:Habelian}

In this section we take $G$ to be finite and $H$ to be finite and abelian. 
We show that, when the natural projection $G\rightarrow \overline{G}=G/[G,G]$ from $G$ to its abelianization is efficiently computable, then, for cryptanalysis, one can consider $\overline{G}$ instead of $G$, with the distribution $\overline{\xi}$ that is induced on $\overline{G}$ by the projection. 

\begin{lemma}\label{G_also_abelian}
Assume $H$ is abelian. Then the $\LHNPKE$ problem for $G,H,K,\xi,$ and $\chi$ is at most as hard as the $\LHNPKE$ problem for $\overline{G} = G/[G,G], H, K, \overline{\xi}$, and $\chi$.
\end{lemma}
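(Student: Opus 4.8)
The plan is to reduce an instance of $\LHNPKE$ for $G,H,K,\xi,\chi$ to an instance for $\overline{G},H,K,\overline{\xi},\chi$, using the fact that $H$ being abelian forces $\varphi$ to factor through the abelianization. First I would recall that $\varphi\colon G\to H$ is a group homomorphism and $H$ is abelian, so $[G,G]\subseteq\ker(\varphi)$; hence $\varphi$ factors uniquely as $\varphi = \overline{\varphi}\circ\pi$ where $\pi\colon G\to\overline{G}=G/[G,G]$ is the natural projection and $\overline{\varphi}\colon\overline{G}\to H$ is a homomorphism. Moreover, this correspondence $\varphi\leftrightarrow\overline{\varphi}$ is a bijection between $\Hom(G,H)$ and $\Hom(\overline{G},H)$, and it is compatible with the uniform sampling: sampling $\varphi$ uniformly from $\Hom(G,H)$ is the same as sampling $\overline{\varphi}$ uniformly from $\Hom(\overline{G},H)$. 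Similarly, by definition $\overline{\xi}$ is the pushforward of $\xi$ along $\pi$, so sampling $g_i\sim\xi$ and setting $\overline{g}_i=\pi(g_i)$ gives $\overline{g}_i\sim\overline{\xi}$.

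Next I would spell out the reduction itself. Suppose we are given an oracle $\mathcal{A}$ that solves $\LHNPKE$ for $\overline{G},H,K,\overline{\xi},\chi$. Given an $\LHNPKE$ instance for $G,H,K,\xi,\chi$, namely the public data $(g_i,\varphi(g_i)h_i)_{i=1}^m$, $\tau$, and the ciphertext $(g,h)$ with $g=g_{w_1}\cdots g_{w_\ell}$, we apply $\pi$ to every first coordinate: form $(\pi(g_i),\varphi(g_i)h_i)_{i=1}^m$, keep $\tau$, and form $(\pi(g),h)$. Since $\varphi(g_i)=\overline{\varphi}(\pi(g_i))$ and $\varphi(g_{w_1})\cdots\varphi(g_{w_\ell})=\overline{\varphi}(\pi(g))$ (using also that $H$ abelian lets us reorder the second coordinate as in Section~\ref{sec:abelian}), this is exactly a valid $\LHNPKE$ instance for $\overline{G},H,K,\overline{\xi},\chi$ with the same secret homomorphism data $\overline{\varphi},\psi$, the same noise $h_i$, the same $\tau$, the same bit $\beta$, and the same word $w$; the induced distributions match by the previous paragraph, and $\overline{G}$ is generated by $\pi(g_1),\ldots,\pi(g_m)$ as required (cf.\ Remark~\ref{rmk:wlog-gen}). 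Feeding this to $\mathcal{A}$ returns $\beta$, solving the original instance. The transformation costs $m+1$ evaluations of $\pi$, which is assumed efficient, so the reduction is polynomial-time.

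I would then remark on the one subtlety worth flagging: one must check that the distribution on the transformed instance is genuinely the distribution prescribed in Definition~\ref{def-LHN-PKE} for $\overline{G},H,K,\overline{\xi},\chi$, i.e.\ that pushing forward the joint distribution over $(\varphi,\psi,g_i,h_i,\tau,\beta,w)$ gives the correct joint distribution over $(\overline{\varphi},\psi,\overline{g}_i,h_i,\tau,\beta,w)$. This follows component by component: $\psi,h_i,\tau,\beta,w$ are untouched, $g_i\mapsto\overline{g}_i$ pushes $\xi$ to $\overline{\xi}$ by definition, and $\varphi\mapsto\overline{\varphi}$ is the bijection $\Hom(G,H)\xrightarrow{\sim}\Hom(\overline{G},H)$ (inflation), which carries the uniform distribution to the uniform distribution. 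The only mild obstacle is purely bookkeeping — making sure the noise $h_i$ still lies in $\ker(\psi)$ and that $\ker(\psi)$ still fails to contain $\mathrm{Z}(H)=H$, both of which are unchanged since $H,K,\psi$ are the same — so there is no real hard step here; the lemma is essentially the observation that $\Hom(G,H)=\Hom(\overline{G},H)$ when $H$ is abelian, packaged as a hardness reduction.
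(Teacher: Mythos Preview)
Your proposal is correct and follows essentially the same approach as the paper: observe that $[G,G]\subseteq\ker(\varphi)$ since $H$ is abelian, factor $\varphi$ through $\pi$, and push every first coordinate through $\pi$ while leaving the $H$-coordinates untouched, so that the same word $w$ encrypts the same bit $\beta$ in the $\overline{G}$-instance. Your treatment is more detailed than the paper's (you spell out the distribution-matching via the bijection $\Hom(G,H)\cong\Hom(\overline{G},H)$ and the pushforward $\overline{\xi}$), but the core argument is identical; one minor quibble is that your parenthetical about reordering via Section~\ref{sec:abelian} is unnecessary here, since $h$ is passed through unchanged and the identity $\varphi(g_{w_j})=\overline{\varphi}(\pi(g_{w_j}))$ already suffices termwise.
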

\begin{proof}
Since $H$ is abelian, the commutator subgroup $[G,G]$ of $G$ is contained in the kernel of $\varphi$. Define $\overline{G}=G/[G,G]$. Then $\varphi\colon G\to H$ induces a well-defined homomorphism $\overline{\varphi} \colon \overline{G} \to H$ and any ciphertext $(g,h)$ can be interpreted as the ciphertext $(\bar g, h)$ in the cryptosystem given by $\overline{\varphi} \colon \overline{G} \to H$, $\psi\colon H \to K$ as before and public key given by $(\overline{g_i},\bar \varphi(\overline{g_i})h_i)$ for $i=1,\dots,m$ together with the same $\tau$ as before. Indeed, if Bob used the word $w_1\cdots w_{\ell}$ to encrypt $(g,h)$, then the same word gives rise to the encryption $(\bar g, h)$ in the new cryptosystem of the same message. 
\qed
\end{proof}

\subsection{Reductions of $\LHNPKE$ in the finite abelian case}\label{ab-groups}

In \cref{ab-groups} we discuss the $\LHNPKE$ problem for $G$ and $H$ finite and abelian. 
\Chloe{Do we actually care whether or not $K$ is finite or abelian?}\Mima{No, but it will reduce to that so it is no harm to assume it}\Chloe{We don't explicitly reduce to that anywhere I think}\Mima{See my comment in 3.1}\Mima{I am not sure where my comment went but it essentially said that what you care about in $K$ is just the image of $\psi$ and that one will be abelian if $H$ is}

\subsubsection{From abelian groups to $2$-groups}\label{sec:reduce_2}
In this section, we take $G$ and $H$ to be finite groups for which \ref{it:A1} and \ref{it:A2} hold,
i.e.\ we assume that both $G$ and $H$ are finite abelian groups and that the odd parts of $|G|$ and $|H|$ are known. We let $G_2$ and $H_2$ denote the unique Sylow $2$-subgroups of $G$ and $H$, respectively,
and $\xi_2$ and $\chi_2$ denote the distributions $\xi$ and $\chi$ restricted to $G_2$ and $H_2$ respectively.
In the new language, the elements $\tau$ and $\tau^\beta$ belong to $H_2$.
In the following, we show that in the case of abelian groups, the $\LHNPKE$ problem for the original data reduces to the $\LHNPKE$ problem for the 2-parts $G_2$ and $H_2$ (with the induced data). %, yielding a classical attack.

\begin{lemma}\label{result_abelian}
Suppose that $G$ and $H$ are finite groups satisfying \ref{it:A1} and \ref{it:A2}. Then the $\LHNPKE$ problem for $G, H, K, \xi$, and $\chi$ is at most as hard as the $\LHNPKE$ problem for $G_2, H_2, K, \xi_2$, and~$\chi_2$.
%Assume that $H$ is abelian. 
\end{lemma}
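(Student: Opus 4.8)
The plan is to exploit the fact that a finite abelian group $\Gamma$ decomposes canonically as $\Gamma = \Gamma_2 \oplus \Gamma_{2'}$, where $\Gamma_2$ is the Sylow $2$-subgroup and $\Gamma_{2'}$ is the (unique) complement consisting of all elements of odd order; under \ref{it:A2} the projection $\Gamma \to \Gamma_2$ is efficiently computable, since knowing the odd part $d$ of $|\Gamma|$ lets one write $1 = a\cdot d + b\cdot 2^e$ (with $2^e$ the $2$-part of $|\Gamma|$) and apply the map $\gamma \mapsto \gamma^{a d}$, which is the identity on $\Gamma_2$ and kills $\Gamma_{2'}$. So first I would fix notation: write $\pi_G \colon G \to G_2$ and $\pi_H \colon H \to H_2$ for these projections, and note that $\xi_2, \chi_2$ (the pushforwards of $\xi,\chi$ along $\pi_G,\pi_H$, equivalently the restrictions once we project) are efficiently samplable.

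Next I would set up the reduction in the style of \cref{G_also_abelian}: given an $\LHNPKE$ instance for $G_2, H_2, K, \xi_2, \chi_2$, I construct an $\LHNPKE$ instance for $G, H, K, \xi, \chi$, solve it using the assumed oracle, and read off $\beta$. Concretely, given a secret $\varphi_2 \colon G_2 \to H_2$ and $\psi_2 \colon H_2 \to K$, define $\varphi = \iota_H \circ \varphi_2 \circ \pi_G \colon G \to H$ (where $\iota_H \colon H_2 \hookrightarrow H$ is the inclusion) and $\psi = \psi_2 \circ \pi_H \colon H \to K$. One checks the required properties transfer: $\ker \psi \supseteq H_{2'}$, and since $\tau \in H_2$ has order $2$ with $\psi_2(\tau) \ne 1$, we get $\tau \in \mathrm{Z}(H) \setminus \ker\psi$ of order $2$, and $\mathrm{Z}(H) = H \not\subseteq \ker\psi$ because $H$ is abelian. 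The public-key elements $(g_i, \varphi(g_i)h_i)$ and the ciphertext $(g, h)$ all lie in $G \times H$; applying $\pi_G$ to first coordinates and $\pi_H$ to second coordinates sends them to exactly the data of the $G_2,H_2$-instance (using $\varphi(g_i) = \varphi_2(\pi_G(g_i))$, $h_i \in \ker\psi \cap \ldots$ — actually the $h_i$ should be sampled in $\ker\psi_2 \le H_2$ and pushed forward, so $\pi_H(h_i) = h_i$, and $\pi_H(\tau)=\tau$). Since the word $w$ and bit $\beta$ are preserved, the oracle's output $\beta$ for the big instance is the answer for the small instance.

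The cleanest way to phrase it, to avoid re-deriving distributions, is the direction actually stated: reduce $\LHNPKE$ for $G,H,\ldots$ to $\LHNPKE$ for $G_2, H_2,\ldots$. So given the original instance — public key $(g_i, \varphi(g_i)h_i)$, $\tau$, ciphertext $(g,h)$ — I apply $\pi_G$ and $\pi_H$ coordinatewise to obtain $(\pi_G(g_i), \pi_H(\varphi(g_i))\pi_H(h_i))$, $\tau$ (note $\tau\in H_2$ so $\pi_H(\tau)=\tau$), and $(\pi_G(g), \pi_H(h))$. Because $\varphi,\psi$ are homomorphisms of abelian groups they respect the direct-sum decomposition, so $\pi_H \circ \varphi = \varphi|_{G_2}\circ \pi_G$ as maps $G\to H_2$, hence these data form a valid $\LHNPKE$ instance for $G_2, H_2, K$ with secret maps $\varphi|_{G_2}, \psi|_{H_2}$ and the induced distributions, word $w$, bit $\beta$; and $\pi_H(h) = \pi_H(h'\tau^\beta) = \pi_H(h')\tau^\beta$ still carries $\beta$ in the same way since $\psi|_{H_2}(\tau)\ne 1$. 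An oracle for the $2$-group problem then returns $\beta$, which solves the original instance. I would also remark that the blow-up in running time is only the cost of the $m+2$ (or so) exponentiations $\gamma\mapsto \gamma^{ad}$, i.e. $\mathrm{polylog}(|G|)+\mathrm{polylog}(|H|)$.

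The only genuinely delicate point — and the one I'd be careful about — is matching the probability distributions so that the reduced instance is distributed exactly as an honest $\LHNPKE$ instance for the $2$-groups (in particular that the induced distribution on $\ker\psi|_{H_2}$ matches what $\chi_2$ on $\ker\psi_2$ would give, and that sampling $\varphi$ uniformly from $\Hom(G,H)$ projects to $\varphi|_{G_2}$ uniform in $\Hom(G_2,H_2)$); since $\Hom(G,H)\cong\Hom(G_2,H_2)\oplus\Hom(G_{2'},H_{2'})$ for abelian groups with the relevant orders, uniformity does push forward correctly, but this is where one must state the hypotheses precisely. Everything else is bookkeeping: verifying the homomorphism, centrality, order-$2$, and kernel conditions pass through the projections, which is routine given \ref{it:A1} and \ref{it:A2}.
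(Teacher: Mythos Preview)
Your approach is essentially the same as the paper's: both project the public key and ciphertext from $G\times H$ onto $G_2\times H_2$ by raising to a suitable odd power, then observe that because $\tau$ has order~$2$ it survives unchanged, so the bit $\beta$ is preserved in the reduced instance. The paper implements the projection more simply by taking $x\mapsto x^{q}$ for $q$ any common odd multiple of the odd parts of $|G|$ and $|H|$ (rather than your B\'ezout-based idempotent $x\mapsto x^{ad}$); this avoids needing to know the $2$-part exponent and yields an automorphism of $G_2,H_2$ rather than the identity, which is harmless here. Your extra care about whether the pushed-forward distributions exactly match an honest $\LHNPKE$ instance for $G_2,H_2$ is a fair point that the paper does not address in detail, but it does not affect the correctness of the reduction as stated.
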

\begin{proof}
Write the orders of $G$ and $H$ as
\[ |G|=2^{n_G}q_G, \  |H|=2^{n_H}q_H, \] 
where $q_G$ and $q_H$ are odd numbers. 
Let $q$ be a common multiple of $q_G$ and $q_H$ that can be computed efficiently.
Then for any elements $g \in G$ and $h \in H$, 
the orders of $g^q$ and $h^q$ are both powers of $2$, i.e.\ $g^q\in G_2$ and $h^q\in H_2$. 
Moreover, $G$ and $H$ being abelian, the assignment $x \rightarrow x^q$ defines  homomorphisms $G\to G_2$ and $H\to H_2$.

Equip $G_2$ and $H_2$ with the induced distributions $\xi_2$ and $\chi_2$ and write $\varphi_2\colon G_2\to H_2$ and $\psi_2\colon H_2\to K$ for the restrictions of $\varphi$ and $\psi$ to $G_2$ and $H_2$, respectively. Assume that $\beta$ can be recovered from $(G_2, H_2, K, \xi_2,\chi_2)$. We show that $\beta$ can be determined from $(G, H, K, \xi,\chi)$.

Recall that the pair $(g,h)=(g_1^{r_1}\cdots g_m^{r_m},\varphi(g)h_1^{r_1}\cdots h_m^{r_m}\cdot\tau^\beta)=(g,h'\tau^\beta) \in G\times H$ is public. Raising both entries to their $q$-th power, one obtains 
 \[
 (g^q,h^q)=((g_1^q)^{r_1}\cdots (g_m^q)^{r_m},\varphi(g^q)\cdot(h_1^q)^{r_1}\cdots (h_m^q)^{r_m}\cdot(\tau^q)^\beta)=(g^q,(h')^q\tau^\beta)\in G_2\times H_2.
 \] 
From the last equation it is clear that 
the elements $g_1,\dots,g_m \in G$ and 
$h_1,\dots,h_m \in H$ are replaced with their $q$-th powers $g_1^q,\dots,g_m^q \in G_2$ and $h_1^q,\dots,h_m^q\in H_2$. Moreover, as $\tau=\tau^q$, it holds that $\tau \in H_2\setminus\ker(\psi_2)$ and so the pair $(\tau,\beta)$ is preserved. By assumption $\beta$ is determined from the data associated to the $2$-parts and so the proof is complete.
\qed
\end{proof}

\subsubsection{From $\LHNPKE$ to eDLP and membership}\label{sec:attack2gps}

\noindent
In this section, we work under Assumption \ref{it:A1} of \cref{sec:assumptions} for $\Gamma\in\{G,H\}$ and Assumptions \ref{it:A3} and \ref{it:A4} of \cref{sec:assumptions} for $\Gamma=G$ and $\{\gamma_1,\ldots,\gamma_m\}=\{g_1,\ldots,g_m\}$. 
Under Assumptions \ref{it:A1}, \ref{it:A3}, and \ref{it:A4} the following problem is well-posed.

\begin{definition}\label{def:eDLP}
The \emph{extended discrete logarithm problem (eDLP)}  for the finite abelian group $\Gamma=\gen{\gamma_1}\oplus\ldots\oplus\gen{\gamma_m}$ is the problem of determining, for each $x\in \Gamma$, the unique vector \[(\alpha_1,\ldots,\alpha_m)\in \Z/|\langle\gamma_1\rangle|\Z\times \ldots\times\Z/|\langle\gamma_m\rangle|\Z\] such that 
    \[x=\gamma_1^{\alpha_1}\cdots \gamma_m^{\alpha_m}.\]
\end{definition}

\noindent
Note that eDLP is just called \emph{discrete logarithm problem} (DLP) in \cite{Sutherland}. For a discussion of existing algorithms to solve it, we refer to \cref{sec:eDLP2gps}. 
In the following, we additionally define a natural variation of the classical membership problem for groups.

\begin{definition}\label{def:membership}
    Let $\Gamma$ be a group. The \emph{generator membership problem} for $\Gamma$ is the problem of determining whether, given  a finite subset $Y$ of $\Gamma$ and an element $x\in \Gamma$, one has $x\in\gen{Y}$.  
\end{definition}

\begin{lemma}\label{lem:h_i<g_i}
Assume that $G$ and $H$ are both finite and abelian and that $G = \langle g_1 \rangle \oplus \cdots \oplus \langle g_m \rangle$ \Mima{together with $g_1,\ldots,g_m$} additionally satisfies \ref{it:A3} and \ref{it:A4}. Then, the $\LHNPKE$ problem for $G,H,K,\xi,$ and $\chi$ is at most as hard as the hardest of the \emph{eDLP} problem for $G$ and the generator membership problem for $H$. \Mima{Why is it phrased like this now? They are not always comparable. I suggest rephrasing as 'at most as hard as the hardest'.. unless we know for sure that the crypto community will read it like this.}
\end{lemma}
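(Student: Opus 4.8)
The plan is to show that, under the stated assumptions, an attacker can recover $\beta$ from the public data by a sequence of reductions that culminates in solving eDLP in $G$ and the generator membership problem in $H$. Since $H$ is abelian, recall from \cref{sec:abelian} that the ciphertext takes the form $(g,h)=(g,\varphi(g)h_1^{r_1}\cdots h_m^{r_m}\tau^\beta)$, where the $r_i$ are the multiplicities of the index $i$ in the (unknown) word $w$. The key observation is that although the individual exponents $r_i$ are not recoverable, what actually matters for decryption is only the \emph{noise term} $h_1^{r_1}\cdots h_m^{r_m}$, and we will pin this down indirectly.

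First I would use assumptions \ref{it:A3} and \ref{it:A4} for $G$ to solve the eDLP instance for $x=g$: write $g=g_1^{\alpha_1}\cdots g_m^{\alpha_m}$ with each $\alpha_i\in\Z/|\langle g_i\rangle|\Z$ uniquely determined. I claim that $\alpha_i\equiv r_i$ modulo $|\langle g_i\rangle|$, because $G=\langle g_1\rangle\oplus\cdots\oplus\langle g_m\rangle$ is a \emph{direct} sum, so the expression of $g$ in the $g_i$ is unique up to the orders of the $g_i$. Next, using the public pairs $(g_i,\varphi(g_i)h_i)$, the attacker forms
\[
P=\prod_{i=1}^m\bigl(\varphi(g_i)h_i\bigr)^{\alpha_i}=\varphi\Bigl(\prod_i g_i^{\alpha_i}\Bigr)\prod_i h_i^{\alpha_i}=\varphi(g)\prod_i h_i^{\alpha_i},
\]
using that $H$ is abelian and $\varphi$ a homomorphism. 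Then $h\cdot P^{-1}=\bigl(\prod_i h_i^{r_i}\bigr)\bigl(\prod_i h_i^{\alpha_i}\bigr)^{-1}\tau^\beta=\bigl(\prod_i h_i^{r_i-\alpha_i}\bigr)\tau^\beta$. The point is that $r_i-\alpha_i\equiv 0$ modulo $|\langle g_i\rangle|$, and since $h_i\in\ker(\psi)$ and (as one checks) the order of $h_i$ can be assumed to divide the order of $g_i$ — or, more carefully, since raising $h\cdot P^{-1}$ by a suitable known exponent kills the ambiguous part — the residual noise becomes trivial, leaving $h\cdot P^{-1}=\tau^{\beta'}$ for a $\beta'$ congruent to $\beta$ appropriately. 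I would then reduce the question ``is $h\cdot P^{-1}$ trivial modulo $\langle\tau\rangle$, equivalently does $h\cdot P^{-1}\in\langle\tau\rangle$?'' — no: more precisely the attacker tests whether $h\cdot P^{-1}=1_H$, which by the decryption analysis in \cref{sec:cryptosystem} holds if and only if $\beta=0$. This last test is exactly a \emph{generator membership problem} instance in $H$ (membership of $h\cdot P^{-1}$ in the trivial, or appropriate, subgroup), solved using the efficiently computable cyclic decomposition of $H$.

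The main obstacle I anticipate is the bookkeeping around orders: the word exponents $r_i$ are genuine non-negative integers whereas the eDLP output $\alpha_i$ lives in $\Z/|\langle g_i\rangle|\Z$, so $h_i^{r_i}$ and $h_i^{\alpha_i}$ agree only if $|\langle g_i\rangle|$ annihilates $h_i$ — which is not automatic. The honest fix, which I would carry out carefully, is to note that after reducing to $2$-groups (\cref{result_abelian}) and using \ref{it:A4}, one can replace the raw ciphertext by a controlled power so that the surviving noise exponents are genuinely multiples of the relevant orders, or alternatively to argue directly that $\prod_i h_i^{r_i-\alpha_i}\in\ker\psi$ is harmless because the whole comparison is ultimately a membership test that the $h_i$-contributions cannot affect once they are forced into the subgroup generated by the public noisy generators. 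Making this step airtight — matching $r_i$ against $\alpha_i$ modulo the correct modulus — is where the real content lies; the reductions to abelian groups (\cref{G_also_abelian}) and to $2$-groups (\cref{result_abelian}) are already in hand, and the eDLP/membership calls are invoked as oracles, so the complexity claim of \cref{main-theorem} will follow by combining their costs with the cost of the cyclic-decomposition computations.
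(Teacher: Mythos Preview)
Your opening moves are exactly right and match the paper: solve the eDLP in $G$ to obtain $\alpha_i\equiv r_i\pmod{|\langle g_i\rangle|}$, form $P=\prod_i(\varphi(g_i)h_i)^{\alpha_i}$ from the public key, and compute $x=hP^{-1}=\bigl(\prod_i h_i^{r_i-\alpha_i}\bigr)\tau^\beta$. The gap is in what you do next. Your proposed test ``$x=1_H$ iff $\beta=0$'' is false: the residual noise $\prod_i h_i^{r_i-\alpha_i}$ need not vanish, because there is no reason the order of $h_i$ (an element sampled from $\ker\psi$) should divide $|\langle g_i\rangle|$. Your fallback suggestions --- raising $x$ to a ``suitable known exponent'', or testing membership in ``the subgroup generated by the public noisy generators'' --- do not work either: an odd exponent will not kill arbitrary noise in a $2$-group, an even exponent kills $\tau^\beta$ as well, and the subgroup generated by the public elements $\varphi(g_i)h_i$ contains $\tau$ by \ref{it:S4}, so membership there carries no information about~$\beta$.

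The idea the paper supplies, and which your plan circles around without landing on, is this. Set $\ell_i=\varphi(g_i)h_i$ and observe that $\ell_i^{|\langle g_i\rangle|}=h_i^{|\langle g_i\rangle|}$, since the order of $\varphi(g_i)$ divides $|\langle g_i\rangle|$. The right-hand side lies in $\ker\psi$, while the left-hand side is \emph{computable from public data} (this is exactly where \ref{it:A4} is used). Let $M=\bigl\langle \ell_1^{|\langle g_1\rangle|},\ldots,\ell_m^{|\langle g_m\rangle|}\bigr\rangle$. Then $M\subseteq\ker\psi$, so $\tau\notin M$. Because each $r_i-\alpha_i$ is a multiple of $|\langle g_i\rangle|$, the residual noise $\prod_i h_i^{r_i-\alpha_i}$ lies in $M$; hence $x\in M\cdot\tau^\beta$, and one recovers $\beta$ by testing whether $x\in M$. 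That is the generator membership problem in $H$ with the explicit public generating set $Y=\{\ell_i^{|\langle g_i\rangle|}:i=1,\ldots,m\}$. Without identifying this particular subgroup $M$ --- which simultaneously contains the leftover noise and excludes $\tau$ --- the reduction to a membership problem is not complete.
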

 
\begin{proof}
Let us write $G$ as $G=\gen{g_1}\oplus\ldots\oplus\gen{g_m}$. Given the public pair $(g,h)=(g,h'\tau^\beta) \in G \times H$ as in \cref{sec:cryptosystem}, let $r_1\ldots,r_m\in\Z$ be the non-negative integers chosen to write $$g=g_1^{r_1}\cdots g_m^{r_m} \ \ \textup{ and } \ \ h=\varphi(g)h_1^{r_1}\cdots h_m^{r_m}\cdot\tau^\beta.$$
For each $i\in\{1,\ldots,m\}$, set $\ell_i=\varphi(g_i)h_i$ and, given that the order of $\varphi(g_i)$ divides $|\langle g_i \rangle|$, compute
 \[
 \ell_i^{|\langle g_i \rangle|}=(\varphi(g_i)h_i)^{|\langle g_i \rangle |}=h_i^{|\langle g_i \rangle|}.
 \]
 As a consequence, the subgroup $M$ of $H$ that is generated by $Y=\{\ell_i^{|\langle g_i\rangle |} : i=1,\ldots,m\}$ is contained in $\ker(\psi)$, and thus $M$ does not contain $\tau$. %Since the orders of the $g_i$'s are known, the subgroup $M$ can be easily determined.

Assume that eDLP is solvable in $G$. With $h$ being as above, let $s_1,\ldots,s_m$ be such that, for any choice of $i$, one has $s_i\equiv r_i \bmod |\langle g_i \rangle|$. Denote by $t_{i}$ the integers satisfying $r_{i} - s_{i} = |\langle g_{i} \rangle|t_{i}$ and
\begin{align*}%\label{eq:X}
    x= \prod_{i=1}^m(\ell_i^{s_i})^{-1}\cdot h=\prod_{i=1}^m\ell_i^{r_i-s_i}\cdot\tau^{\beta} = \prod_{i=1}^m(\ell_{i}^{|\langle g_{i}\rangle|})^{t_{i}}\cdot \tau^{\beta} \eqqcolon h_{M}\cdot \tau^{\beta},
\end{align*}
where we easily observe that $h_{M} \in M$. We now see that $\beta = 0$ if and only if $x\in M$ and thus, if we are able to solve the generator membership problem for $H$, we are also able to determine whether $\beta=0$ or $\beta=1$. 

%We see that $Y$ must be of polynomial-size since it must be smaller than the public key. Then, by \cite[Lemma~3(a)]{BerVega14}, 
%\Chloe{`as $H$ satisfies (A1), (A3), and (A4')?}
%deciding whether $x \in M$ or not is $O\Big{(}\text{polylog}(|H|)\Big{)}$. 

% Another algorithm that decides the membership problem, is given in an older paper by Iliopoulos \cite[Prop. 6.2, pg.91]{Iliop}. Its complexity depends on $|M|$ and $|x|$, instead. 
%Let us also remark that for this algorithm, neither assumption \ref{it:A3} nor assumption \ref{it:A4} are required for our group $H$; however, we still need to be able to compute $|x|$ and $|\ell_i^{|g_i|}|$ for each $i = 1, ..., m$. The complexity of this algorithm, adjusted to our notation, is given as: 
%$$O\Big{(}|x|^{1/2}(\log|x| + |M|) + m|M|^{1/2}\log^{3}|M|(\log|M| + \xi)\Big{)},$$ 
%where $\xi$ is the number of elementary operations required for a group operation. 
%\Chloe{This is exponential complexity. The order of $x$ is about $|H|$.}
%As given in the same paper, a group operation in our case requires at most
%$$\Big{(}c|M|\log|M|\log\log|M|\Big{)} \text{-many elementary operations,}$$ where $c$ is some fixed positive constant (please see \cite[pg. 67]{Iliop} about $\xi$, and \cite[Theorem 3.1. pg.4]{Iliop} about the number of elementary operations). 

\Chloe{I have commented out the membership problem stuff to put in the proof of the main theorem instead (along with the eDLP complexity). The motivation behind this is to have this lemma be "reductions to known problems" -- since we have stated in in general for abelian groups and we really want things for 2-groups only which is much easier.}
\qed
\end{proof}

\subsection{Proof of Theorem~\ref{main-theorem}}\label{main-theorem-proof}

\noindent
We are given the following public information:
\begin{enumerate}
    \item $\{(g_1,\varphi(g_1)h_1),\ldots,(g_m,\varphi(g_m)h_m),\tau\}$ and 
    \item $(g,h)=(g,h'\tau^\beta) \in G \times H$
\end{enumerate}
One can proceed as follows: \Mima{this whole proof is basically all contained in 3.4 already, in a precise way for what concerns the reductions}\Eleni{Yes. And it is very nicely written for general $G$, so why not keep $G$ as it is then? I.e. not necessarily abelian, and then we  pass to $\overline{G}$. We also need \ref{it:A3} for $H$, to be able to employ {BerVega14}.} \Mima{The point of writing it in a section (at least this is why I did it like that) was that we could as each step write what was needed. If we want to write it as a theorem it becomes something unreadable if we keep it as is (without weakening anything I mean). That is my point of view now but I am happy to be proven wrong}
\begin{itemize}
%    \item 
%    In case $G$ is not abelian, 
%    the $\LHNPKE$ problem for $G,H,K,\xi,$ and $\chi$ is reduced to the $\LHNPKE$ problem for $\overline{G}, H, K, \overline{\xi}$, and $\overline{\chi}$ following \cref{G_also_abelian}; note that $\overline{G}$ is abelian.
    \item By \cref{result_abelian}, as \ref{it:A1} and \ref{it:A2} are both satisfied for both $G$ and $H$,
    the $\LHNPKE$ problem for $G, H, K, \xi,$ and $\chi$ is at most as hard as the $\LHNPKE$ problem for
    $G_2, H_2, K, {\xi}_2$, and ${\chi}_2$, where
    $G_2$ and $H_2$ are the Sylow 2-subgroups of $G$ and $H$, respectively. \Eleni{... where $G_{2} = \langle {g_1}^{q_{G}} \rangle \oplus \cdots \oplus \langle g_m^{q_{G}} \rangle$ and $H_{2} = \langle u_1^{q_{H}} \rangle \oplus \cdots \oplus \langle u_k^{q_{H}} \rangle$, and the largest odd factors $q_{G}, q_{H}$ of the orders of $G$ and $H$ respectively, are known by \ref{it:A2}.}
    \Chloe{I have put the $G$ parts of this into the next comment. I don't think \ref{it:A3} is quite the right thing for $H$ which is why that was written out instead}
    \item As $G = \langle g_1 \rangle \oplus \cdots \oplus \langle g_m \rangle$ satisfies \ref{it:A3}, and by \ref{it:A2} the largest odd factor $q_G$ of $|G|$ is known or easily computable, we also have that $G_2 = \langle g_1^{q_G} \rangle \oplus \cdots \oplus \langle g_{m}^{q_G} \rangle$ satisfies \ref{it:A3};
    note that some components may now be trivial, but they are necessarily still distinct as no power of $g_i$ was contained in $\langle g_j \rangle$ for $i\neq j$.% \Vishnu{$\#$ is not used anywhere else, so changing it to $|G|$.}
    \Chloe{\ref{it:A3} doesn't follow from this. But why do the components need to be distinct?} \Annette{To me it seems that the components are still 'distinct' (as Eleni wrote), i.e. the subgroups $\langle g_i^q\rangle $ also form a direct sum (since the groups $\langle g_i \rangle $ do. It might very well happen that some of the new components $\langle g_i^q\rangle $ are trivial, but that should not cause any problems in the proof. }\Chloe{ you are right: this is ok, hooray! I've added a comment about this}
    Also, as $G_2$ is a 2-group, checking the order of any element is polynomial-time,
    just by repeated squaring and checking for the identity, so \ref{it:A4} is also satisfied.\Mima{Where is the (A3) assumption coming in here?}\Chloe{not until we invoke the lemma, but talking about A4 doesn't make sense without A3. Have clarified (hopefully)} \Mima{This is sadly not true as the theorem is written now -- see my comments there}\Eleni{But if we stick to the fixed set of generators, for both $G$ and $H$ (with A1,A2,A3, satisfied for both $\overline{G}$ and $H$), and then do the reductions on these generators to get to the $2$-part (as it is done in Section 3.4), then we are ok, no? This is all I was trying to say with my comments.}\Mima{we might be saying the same? the important thing is to fix the public information beforehand - we can't do it 'for any' choice -- that is the main point I was trying to make. so if we write a statement including distributions etc, then we should also reintroduce the public information}
    Therefore, by \cref{lem:h_i<g_i},
    the $\LHNPKE$ problem for $G_2,H_2,K,\xi_2,$ and $\chi_2$ is reduced to the eDLP problem in $G_2$ and the generator membership problem in $H_2$.
\end{itemize}
The complexity of the eDLP problem in a 2-group $G_2$ with $m$ cyclic components is given by
\[ O\left(
\frac{\log\log|G_2|}{\log\log\log|G_2|}\log|G_2|+\frac{\log_2|G_2|}{m}2^{m/2}
\right)
\]
by \cite[Cor.~1]{Sutherland} (see Section~\ref{sec:eDLP2gps} for more details).
The membership problem for $H_2$ has polylogarithmic complexity by \cite[Lemma 3(a)]{BerVega14};\Eleni{Why is the comment on the required decomposition from {BerVega14}, commented out? I think it is important.}\Mima{no idea why}\Chloe{because the required decomposition is now an assumption in the theorem instead -- I wasn't sure if \ref{it:A3} and \ref{it:A4} were quite the right assumptions, I'll rephrase it instead}
note that the required decomposition of $H_2$ discussed in \cite[Section 3.1]{BerVega14}
exists by assumption.\Eleni{Now I see the new assumption in Theorem3 for $H_{2}$. What is the complexity for computing this decomposition for $2$-groups? When I was looking for a general group, it was exponential. Perhaps it is more `doable' (and `believable') to assume \ref{it:A3} for $H = \langle u_1 \rangle \oplus \cdots \oplus \langle u_k \rangle$, get the corresponding set of generators for $H_{2} = \langle u_1^{q_{H}} \rangle \oplus \cdots \oplus \langle u_k^{q_{H}} \rangle$, and compute the orders of the $u_{i}^{q_{|H|}}$ (which we can do in polynomial-time since we are in a $2$-group). I believe that we can now acquire the required decomposition of {BerVega} for $H_{2}$, assuming we can compute efficiently $\text{lcm}$'s of the orders of the $u_{i}^{q_{|H|}}$'s (?).}
\Chloe{We could make a remark about $H$ satisfying \ref{it:A3} being an example of an $H$ in which computing this decomposition is easy. But it seems restrictive and not necessarily easier to read to impose this restriction in the theorem in my opinion.}
%is polylogarithmic time for a 2-group satisfying \ref{it:A1} and \ref{it:A3}.
\qed

\subsection{Assumptions and reductions}\label{sec:limitations}

Assume in this section that $G$ and $H$ are finite and that $H$ is abelian. One can then, as before under some reasonable assumptions, reduce to the abelian case of \cref{main-theorem} as follows.

Thanks to \ref{rmk:wlog-gen}, for cryptanalysis purposes, we can replace $G$ with $\tilde{G}=\gen{g_1,\ldots,g_m}$ and we do so. Moreover, 
in view of \cref{G_also_abelian}, the $\LHNPKE$ problem on $(\tilde{G},H,K,\xi,\chi)$ is reduced to the $\LHNPKE$ problem on $(\overline{G},H, K, \overline{\xi},\chi)$, where $\overline{G}$ denotes the abelianization of $\tilde{G}$ and $\overline{\xi}$ is as in \cref{sec:Habelian}. Assumption \ref{it:A1} holds for $\overline{G}$, 
the elements $g_1,\ldots,g_m\in G$ are replaced with their images $\overline{g_1}, \ldots,\overline{g_m}$ in $\overline{G}$, 
and the homomorphism $\phi:\tilde{G}\rightarrow H$ is replaced with the induced homomorphism $\overline{G}\rightarrow H$, which we identify with $\phi$, for simplicity.

Assuming now that \ref{it:A2} holds for $H$ and $\overline{G}$, \cref{result_abelian} allows to reduce the $\LHNPKE$ problem on $(\overline{G},H)$ to the $\LHNPKE$ problem on $(\overline{G}_2,H_2)$, where $\overline{G}_2$ and $H_2$ denote the Sylow $2$-subgroups of $\overline{G}$ and $H$. Here the elements $\overline{g_1},\ldots,\overline{g_m}$ and $h_1,\ldots,h_m$ are replaced by 
\[\overline{g_1}^q,\ldots,\overline{g_m}^q \ \ \textup{ and } \ \ h_1^q,\ldots,h_m^q\]
where $q$ denotes the least common multiple of the odd parts of $|\overline{G}|$ and $|H|$. 

Note that $\overline{G}_2=\gen{\overline{g_1}^q,\ldots,\overline{g_m}^q}$ \Chloe{this is $\overline{G}_2$} and  suppose, at last, that \ref{it:A3}\Chloe{this is way too strong, wanting none of the $\overline{g_i}^q$s to be identified in all the quotienting, assuming this is meant to read that \ref{it:A3} together with $\overline{g}_1^q,\ldots,\overline{g}_m^q$ holds. We should say instead something like, `suppose that, after reordering, there is a $m' \leq m$ for which \ref{it:A3} holds for $G_2$ together with $\overline{g_1}^q,\ldots,\overline{g_{m'}}^q$}\Annette{I see your point that in practice, it might be unlikely for this assumption to hold, but at least this way everything that is written is correct. It is not that simple with the reordering, because remember that there are also all the $h_i$'s involved. E.g. if $\bar g_1=\bar g_2$ then still in the message you have both $h_1$ and $h_2$ mixed in, so following Lemma 4, we need both $\bar g_1$ and $\bar g_2$.}\Chloe{You are right about the $h$s. We use \ref{it:A3} for Lemma 8, and I can't see a problem with the proof of Lemma 8 if some of the components are identified -- but would appreciate a second pair of eyes to look at this. If we make such a (hopefully unnecessary) assumption on $G$ then we can't claim that our assumptions are mild / apply to all the relevant cases and it really undermines the point of the paper, so I think it's important to remove this if we can (it also applies to the proof of Theorem~\ref{main-theorem}).} 
holds for $\overline{G}_2$ together with $\overline{g_1}^q,\ldots,\overline{g_m}^q$;
that is 
\[\overline{G}_2=\gen{\overline{g_1}^q}\oplus\ldots\oplus\gen{\overline{g_m}^q}.\]
Then, assuming that we can efficiently compute in $\overline{G}_2$, \ref{it:A4} also holds: The orders of each $\overline{g_i}^q$ can be found in polynomial-time by repeated squaring as all of the orders are powers of two. 
 Then, as a consequence of \cref{lem:h_i<g_i}, the $\LHNPKE$ problem for $(\overline{G}_2,H_2)$ is reduced to the eDLP problem in $\overline{G}_2$ and the generator membership problem in $H_2$. 

Both this more general attack and the attack on the finite abelian case presented in \cref{main-theorem} rely in some capacity on the eDLP, the generator membership problem, and our assumptions \ref{it:A1}--\ref{it:A4}.
While the eDLP and the generator membership problem are relatively well-understood in the cases we need for our attacks, their complexity depends heavily on the instantiation, and we discuss these problems in \cref{sec:eDLP2gps} and \cref{sec:membership} respectively.
Also, while assumptions~\ref{it:A1}--\ref{it:A4} are natural assumptions to make when constructing groups that both admit efficient computation and satisfy the security requirements \ref{it:S1}--\ref{it:S4} of \cref{sec:security-assumptions}, 
there exist examples of groups where these assumptions are not satisfied or may be at odds with our security requirements.
In~\cref{RSAetc} and~\cref{A3justify} we discuss assumptions~\ref{it:A1}--\ref{it:A4} with a view towards constructing instantiations of Leonardi--Ruiz-Lopez encryption to which our classical attack does not apply, or at least is not polynomial-time.

\subsubsection{The eDLP in finite abelian $p$-groups}\label{sec:eDLP2gps}

Let $p$ be a prime number and let $G$ be a finite abelian $p$-group given as 
\[G=\gen{g_1}\oplus\ldots\oplus\gen{g_m},\]
where the orders of the summands are known.
Let $e\in\Z$ be such that $p^e$ is the exponent of $G$. Then, according to \cite[Cor.~1]{Sutherland}, the eDLP in $G$ can be solved using 
\begin{equation} \label{re} O\left(
\frac{\log(e+1)}{\log\log(e+2)}\log|G|+\frac{\log_p|G|}{m}p^{m/2}
\right)
\end{equation} 
group operations on a classical computer.
In particular, when $m$ is polynomial in $\log\log |G|$ and $p$ a fixed (small) prime, the eDLP has complexity polynomial in $\log|G|$.
When $p=2$, setting  $n= \log_p|G|$, we deduce the following from the performance result in  \cite[Table~1]{Sutherland}: 
\begin{itemize}
    \item When $m =1,2,4,8,$ the counts are dominated by the first term of \eqref{re}. This explains the initial cost decrease when $m$ increases for a fixed $n.$ 
    \item Using Shank's algorithm, there is a possibility of improving the factor $n/m$ by $\sqrt{n/m}$ though this is not relevant for applications: these normally require that $n/m$ is close to $1.$ 
\end{itemize}
To the best of our knowledge, there is no existing work on the eDLP that beats~\cite{Sutherland}. 

\subsubsection{The generator membership problem}\label{sec:membership}

Here we briefly discuss some known results on the generator membership problem in more generality than finite abelian 2-groups satisfying our assumptions:
Let $H$ be a finite group and let $Y$ be a (finite) subset of $H$. Below we collect some known results regarding membership testing for $\gen{Y}$, depending on how the group $H$ is given.

If $H$ is given as a subgroup of some finite $\GL(n,q)$ and $\mu$ denotes the largest prime dividing $|\gen{Y}|$ and not dividing $q$, then testing whether $x\in H$ belongs to $\gen{Y}$ can be achieved in time that is polynomial in $|Y|+n+q+\mu$; cf.\ \cite[Theorem~3.2(2)]{Luks}.
%\Chloe{This doesn't really fit here as we're talking about abelian groups. (And the result isn't close to giving us what we need, so I don't think we can use it)}\Mima{The paper is about solvable groups so it does, but I got your point that you cannot control the complexity well.}

%If $H$ satisfies \ref{it:A3}, then \cite[Lemma~3(a)]{BerVega14} ensures that, if $Y$ is of \emph{polynomial size}, then there is a polylog-time deterministic algorithm that tests membership in $\gen{Y}$.
%\Chloe{If $Y$ is not of polynomial-size the cryptosystem breaks down as $Y$ is smaller than the public key}\Mima{This is a comment we can make, but elsewere as this section does not deal with the cryptosystem}
\Chloe{I have commented out \cite{BerVega14} as this is now a section about why we have restricted tot eh case where we \emph{can} use that}

If $H$ is given as a permutation group and satisfies \ref{it:A3}, then membership testing in $Y$ is NC (polylogarithmic on a polynomial number of parallel processors); cf.\ \cite[Theorem~1.2(a)]{NC} and \cite[Theorem~1]{AbNC}.
\Chloe{\cite{NC} requires (A3). \cite{AbNC} doesn't seem to have a Theorem 1?}\Mima{found the problem: I meant to cite \url{https://ieeexplore.ieee.org/stamp/stamp.jsp?tp=&arnumber=4568072} (the titles are very similar)}
\Chloe{I can't open this reference! We could just remove it and leave it as \cite{NC} as this is now more exploratory?}\Mima{OK}

\Mima{Please feel free to add more. My understanding from the references that I checked is that it is not known if membership testing is polynomial time for permutation groups in general and I could not find anything specific complexity-wise for abelian permutation groups. Also, are there maybe more recent general references? I found one citing \cite{AbNC} but it did not look worth mentioning here.}

\subsubsection {On the security assumptions \ref{it:A2} and \ref{it:A4}} \label{RSAetc}
In \cref{sec:attack2gps} above we described how and under which assumptions the $\LHNPKE$ problem for $G$, $H$, and $K$ can be reduced to solving the eDLP in $G$ and the membership problem in $H$. 
The necessity for setting assumptions \ref{it:A2} and  \ref{it:A4} in particular, came from the fact that there are known examples of abelian groups, some of them already being used successfully in existing  cryptographic protocols, that do not have to obey them. 

One type of such groups are the known RSA groups. These are given in the form $(\mathbb{Z}/N\mathbb{Z})^{\times}$ where $N = pq$ is hard to factor. In this case, Alice would know the factorisation and hence the order of the group, but an adversary should not be able to compute it. 

Another type of such groups, where even the creators of the cryptosystem might not know the group's order, is that of ideal class groups of imaginary quadratic number fields. This is an interesting category of groups for cryptography since it allows one to work in a \emph{trustless setup}. In other words, we do not need a trusted third party to generate groups of secure order, in contrast to cryptosystems that employ RSA groups for example, where a trusted third party needs to generate a secure, i.e. hard to factor, modulus $N \in \mathbb{N}$ for the groups $(\mathbb{Z}/N\mathbb{Z})^{\times}$. Despite the fact that neither the structure nor the order of the ideal class group is known, the group operation is efficient and the elements of the group have a compact representation, via reduced binary quadratic forms. An excellent reference for trusted unknown-order groups is the paper by Dobson, Galbraith and Smith \cite{DGS};  
 they also take into account Sutherland's algorithm \cite{SuthClass} and they propose new security parameters for cryptosystems that employ ideal class groups. In the same paper the authors also discuss other groups of unknown order that can be used, namely genus-$3$ Jacobians of hyperelliptic curves, initially introduced by Brent \cite{Brent}. Even though these groups appear to have some computational advantages when compared to ideal class groups, %\Mima{again, I suggest to cut this}, 
 they offer less security than the genus-$1$ and genus-$2$ curves, as one can see for example in \cite{FWW}. 
 %\Vishnu{Is it common to acknowledge anonymous referee in the main text?}\Mima{I have never seen that done. Only in the acknowledgements, but we already thank them there.}

\subsubsection{Regarding Assumption \ref{it:A3}}\label{A3justify}

In this section, we discuss Assumption \ref{it:A3}. For the sake of simplicity and in view of \cref{sec:reduce_2}, we restrict to $2$-groups but everything can be said similarly for arbitrary finite abelian groups using \cref{result_abelian}. 

    The case in which Assumption \ref{it:A3} holds for the subgroup $\gen{g_1,\ldots,g_m}$ of $G$, i.e., $g_1,\dots,g_m$ satisfy $\langle g_1,\dots,g_m \rangle =\gen{g_1}\oplus\ldots\oplus\gen{g_m}$ (equivalently $g_1,\ldots, g_m$ are \emph{independent}), seems to be the key case for the following reasons. First of all, when sampling $g_1,\dots,g_m$ from $G$ uniformly, it is very likely that these are independent, at least if $m$ is small in comparison with the number of cyclic factors in $G$. 
    For example, if $G=C_2^\lambda$ is  a direct product of cyclic groups of order $2$, then the probability of sampling $g_1,\dots,g_m$ that are independent is
    $$ \frac{(2^\lambda-1)(2^\lambda-2)(2^\lambda-2^2)\cdots(2^\lambda-2^{m-1})}{2^{m\lambda}},$$ which is very close to $1$ if $m\ll \lambda$. 
    In addition, the case in which $g_1,\dots,g_m$ are independent can be seen as the generic case; 
    we expect that an attacker could use similar strategies as developed in \cref{sec:assumptions} and design an attack for the dependent case. Indeed, if $g_1,\dots,g_m$ are dependent, the attacker will obtain more information from the public key than in the independent case. \\

There is also another strategy for an attack if \ref{it:A3} does not hold for $G$ and the number of cyclic factors in $H$ is small. More precisely, assume that $|\Hom(H,\F_2)|$ is sub-exponential, or in other words, the number of maximal subgroups of $H$ is sub-exponential. In addition, it would be necessary to list these maximal subgroups $M$ and to be able to work in the quotients $H/M$ efficiently. If Eve can write any element $g \in G$ as a product in $g_1,\dots,g_m$  then she can recover the secret message $\beta$ from the ciphertext $(g,h)$ as follows. For every maximal subgroup $M$ of $H$ not containing $\tau$ (by assumption, there are only sub-exponentially many of these) convert $h$ into an element $\tilde h$ in $H/M$. Decrypt $(g,\tilde h) \in G\times \tilde H$ as if the noise was erased completely in $H/M$ (as explained in Section \ref{sec:noiseless}) and check on a number of self-encrypted messages if this provides a correct decryption function. Since $\ker(\psi)\neq H$, there will always be a maximal subgroup $M$ of $H$ that contains $h_1\dots,h_m$ but does not contain $\tau$, so eventually this procedure will indeed provide a decrpytion function. 

\subsection{Comparison with the quantum attack}\label{sec:LRL-cryptanalysis}

Throughout this section, assume that \ref{it:A3} holds for the abelian group $G$ together with $g_1,\ldots,g_m$. Even though we only did cryptanalysis for finite groups until now, we show in the following two subsections how one can perform quantum attacks when $G$ is either torsion-free or finite.
The torsion-free case was handled already in~\cite{Eprint} but we include it here for completeness; the finite case is new.
The mixed case can be considered as a combination of the two cases: first dealing with the free part of the group and then recovering $\beta$ as explained in the finite case.

\subsubsection{The torsion-free case}\label{subs:TorsionFree}

Assume as in \cite[Sec.\ 8.2]{Eprint}, that $G=\gen{g_1}\oplus\ldots\oplus\gen{g_m}$ is isomorphic to $\Z^{m}$, i.e.\ the orders $|\langle g_i \rangle|$ are all infinite.
We briefly recall the discussion from \cite[Sec.\ 8.2]{Eprint}. 
To this end, let $f:\Z^{m+1}\rightarrow G$ be defined by
\[
(a_1,\ldots,a_{m+1}) \longmapsto g^{a_{m+1}}\cdot g_1^{a_1}\cdots g_m^{a_m}. 
\]
 Then the kernel of $f$ is equal to $\gen{(r_1,\ldots,r_m,1)}$ and it can be determined, using Shor's algorithm~\cite{Shor_2}, in quantum polynomial time in $m$; cf.\ \cite{quantumtorsionfree}.
 %\cite{ImIv22,Wat01}. 
Once $(r_1,\ldots,r_m,1)$ is known, it is easy to recover $\beta$ from the encrypted message $(g,h)$ and the public information.

\subsubsection{The torsion case}\label{finiteabelianquantum}

Assume in this section that 
$G$ is finite and that \ref{it:A4} holds in addition to \ref{it:A1} and \ref{it:A3}. Let, moreover,  $f:\Z^{m+1}\rightarrow G$ be defined by
\[
(a_1,\ldots,a_{m+1}) \longmapsto g^{a_{m+1}}\cdot g_1^{a_1}\cdots g_m^{a_m}. 
\]
Then a set of generators of $\ker(f)$ can be determined in time polynomial in $\log|G|$ on a quantum computer \cite{Kitaev,Shor_2,Simon}; see also \cite{EHK04}. 
Note that $\ker(f)$ contains $|\langle g_1 \rangle|\Z\times \ldots\times |\langle g_m \rangle |\Z\times |\langle g \rangle|\Z$. 

Let now $(s_1,\ldots,s_{m+1})$ be one of the generators found. Then $g=g_1^{-s_1s_{m+1}}\cdots g_m^{-s_ms_{m+1}}$ and so it follows that 
\[
g_1^{r_1+s_1s_{m+1}}\cdots g_m^{r_m+s_ms_{m+1}}=1.
\]
This is the same as saying that, for each $1\leq i\leq m$, one has $$r_i\equiv -s_is_{m+1}\bmod |\langle g_i \rangle|. $$ 
Modding out $H$ by the subgroup generated by all $h_i^{|\langle g_i \rangle|}$  one can recover $\beta$ and thus solve the $\LHNPKE$ problem; cf.\ \cref{lem:h_i<g_i}.

\section{An attack strategy that reduces to the abelian case}\label{sectionConvert}
In this section, we assume $G$ and $H$ are finite.
A general strategy for an attacker to solve the $\LHNPKE$ problem could be to convert the $H$-part of the ciphertext as follows.   Assume that Eve has access to the ciphertext
\[(g,h) = (g_{w_1}\cdots  g_{w_\ell}, \varphi(g_{w_1})h_{w_1}\cdots\varphi(g_{w_{\ell}})h_{w_\ell}\tau^\beta) \in G \times H\]
encrypted by Bob using Alice's public key
\[\{(g_1,\varphi(g_1)h_1),\ldots,(g_m,\varphi(g_m)h_m),\tau\}.\]Eve can then choose a homomorphism $\vartheta\colon H \to \overline{H}$ for some group $\overline{H}$ such that $\vartheta(\tau)\neq 1$ and compute $\vartheta(h)$. Her new pair $(g,\overline{ h})=(g,\vartheta(h))$ is then of the form 
\[ (g_{w_1}\cdot \dotsc \cdot g_{w_{\ell}},\overline{\varphi}(g_{w_1})\overline{ h}_{w_1}\cdot \dotsc \cdot \overline{\varphi}(g_{w_{\ell}})\overline{ h}_{w_{\ell}} \overline{\tau}^\beta) \]  where, for all $i$, we set $\overline{h_i}=\vartheta(h_i)$ and write  $\overline{\varphi}=\vartheta\circ\varphi$ and $\overline{\tau}=\vartheta(\tau)$. Since $\overline{\tau}\neq 1$, this pair still contains the information on $\beta$ and if $\vartheta$ is chosen cleverly, it might be much simpler to deduce $\beta$ from $(g,h')$. Note that in general, we cannot define a suitable counterpart $\overline{\psi}$ of $\psi$ here, so the information that $\tau$ is not contained in the kernel of $\psi$ cannot be directly converted into a statement on $\overline{\tau}$ and has to be considered individually (if necessary).

\medskip

A special case of this strategy is to reduce, if possible, to a finite abelian group $\overline{H}$, i.e., the goal is to eventually apply the attack that we describe in~\cref{sec:limitations} even when $G$ and $H$ are nonabelian.  

Suppose that, given $\tau$ and $H$, Eve is able to find a finite abelian group $\overline{H}$ and an efficiently computable homomorphism $\vartheta: H \rightarrow \overline{H}$ such that $\vartheta(\tau) \neq 1$ and such that it can be checked that $\vartheta(\tau)$ is not contained in the subgroup generated by certain powers of $\vartheta(\varphi(g_i)h_i)$. This condition can be checked using the public key and we will specify below which powers are sufficient.

Eve follows the following three steps based on \cref{G_also_abelian}, \cref{result_abelian} and \cref{lem:h_i<g_i}.

\paragraph*{Step 1.} Let $\overline{G}=G/[G,G]$ be the abelianization of $G$ and $\bar \varphi \colon \overline{G} \to \overline{H}$ be the homomorphism obtained from $\vartheta\circ \varphi \colon G\to \overline{H}$ by reducing modulo $[G,G]$. For all $i$, define $\bar h_i=\vartheta(h_i)$, set $\bar \tau=\vartheta(\tau)$ and, for all $g$ in $G$, let $\bar g \in \overline{G}$ be the image of $g$ in $\overline{G}$. 

As in \cref{G_also_abelian}, Eve replaces the encrypted message by 
\[(\bar g, \vartheta(h))=(\bar g_{w_1}\cdots  \bar g_{w_\ell}, \bar\varphi(\bar g_{w_1})\bar h_{w_1}\cdots\bar\varphi(\bar g_{w_{\ell}})\bar h_{w_\ell}\bar \tau^\beta) \in \overline{G} \times \overline{H}\]
in order to work inside abelian groups $\overline{G}$ and $\overline{H}$.

\paragraph*{Step 2.} Provided \ref{it:A2} holds for the finite abelian groups $\overline{G}$ and $\overline{H}$, Eve proceeds in a similar way as in the proof of \cref{result_abelian}: First, she computes the largest positive odd factor of the group orders of $\overline{G}$ and $\overline{H}$ and finds a common multiple $q$ of their odd parts. 
Then she converts the tuple $(\bar g, \vartheta(h))  $ above into a tuple with entries inside the $2$-Sylow subgroups $\overline{G}_2$, $\overline{H}_2$ of $\overline{G}$ and $\overline{H}$ by taking both entries to their $q$-th powers: 
\[(\bar g^q, \vartheta(h)^q)=(\bar g_{w_1}^q\cdots  \bar g_{w_\ell}^q, \bar\varphi(\bar g_{w_1}^q)\bar h_{w_1}^q\cdots\bar\varphi(\bar g_{w_{\ell}}^q)\bar h_{w_\ell}^q\bar \tau^\beta) \in \overline{G}_2 \times \overline{H}_2.\]

\paragraph*{Step 3.} Finally, assuming \ref{it:A3} and \ref{it:A4} hold for $\overline{G}_2$, Eve proceeds in a  similar way as in the proof of \cref{lem:h_i<g_i}. 
For all $i$ define $\bar \ell_i=(\bar\varphi(\bar g_i)\bar h_i)^q=\vartheta(\varphi(g_i)h_i)^q$ and let $
\beta_i$ be the order of $\bar g_i^q$. 
Eve computes the elements $\bar \ell_i$ and the numbers $\beta_i$ using the public key. Then  $(\bar \ell_i)^{\beta_i}=(\bar h_i^q)^{\beta_i}$ can also be computed from the public key. Assuming the eDLP is solvable in $\overline{G}_2$ and the generator membership problem is solvable in $\overline{H}_2$, and if $\bar \tau$ is not contained in the subgroup $M$ generated by $(\bar \ell_1)^{\beta_1},\dots,(\bar \ell_m)^{\beta_m}$, the value of $\beta$ can be recovered as in the proof of \cref{lem:h_i<g_i}.
%{\color{lightgray} Old formulation: If $\bar \tau$ is not contained in the subgroup $M$ generated by $(\bar \ell_1)^{\beta_1},\dots,(\bar \ell_m)^{\beta_m}$,then working inside $\overline{H} /M$ can reveal the value of $\beta$ as in the proof of \cref{lem:h_i<g_i} under similar assumptions as stated there. }

%{\color{lightgray}NEW: Eve first reduces the problem to $2$-groups using \cref{result_abelian} and then uses \cref{lem:h_i<g_i} to solve eDLP for $2$-groups, since solving eDLP for $2$-groups is faster than solving eDLP for other groups. See Section 3.3.1}
%\Vishnu{Is this true? the eDLP depends both on $p$ and $m$, I do not know how m changes as we vary $p$ } \Annette{The new $m$ of the $2$-part of the  group is smaller or equal than the 'old' m, so I think what you are saying is true} \Mima{I am confused, why is this here and not in the next subsection?}\Vishnu{I moved it to the next step, but now I think we can skip this remark, so I put it in gray.}\Mima{I agree to removing it}

\begin{rmk}\label{rmk:attack-ab}
    Observe that the above attack applies to any finite group $H$ if $\tau \not\in [H,H]$ and the image of $\tau$ in $H/[H,H]$ is not contained in the subgroup generated by the images of certain powers of $\varphi(g_i)h_i$.
     In particular, it is advisable (but likely not sufficient) to sample $\tau$ from~$[H,H]$. 
\end{rmk}

\section{Normal forms and the solvable case}\label{sec:solvable}
 
In this section we consider the case in which $G$ and $H$ are finite and solvable and give evidence of why, given the efficiency constraints attached to the system, the groups should not be expected to provide secure postquantum cryptosystems. This is done in analogy to \cref{sec:LRL-cryptanalysis}, but with more iterations. Roughly speaking, since solvable groups have natural filtrations yielding abelian quotients, as we explain below, one could perform a quantum attack on solvable groups by iterating quantum attacks on abelian groups.

For the background on finite solvable groups we refer to the very friendly \cite[Ch.\ 3]{Isaacs}.

\begin{definition}
For a finite group $\Gamma$, the \emph{derived series} of $\Gamma$ is the series $(\Gamma^{(i)})_{i\geq 1}$ defined inductively by 
\[\Gamma^{(1)}=\Gamma \ \ \textup{ and } \ \  \Gamma^{(i+1)}=[\Gamma^{(i)},\Gamma^{(i)}].\]
If for some index $m$ the group $\Gamma^{(m)}$ is trivial, then $\Gamma$ is said to be \emph{solvable}.
\end{definition}

In a finite group $\Gamma$ each quotient $\Gamma^{(i)}/\Gamma^{(i+1)}$ is abelian and, if $\Gamma$ is solvable and $\Gamma^{(i)}/\Gamma^{(i+1)}$ is trivial, then $\Gamma^{(i)}=\Gamma^{(i+1)}=\Gamma^{(i+2)}=\ldots=\{1\}$.

Until the end of this section, assume that $H$ is solvable. Then, for cryptanalysis purposes and in analogy with \cref{G_also_abelian}, we assume without loss of generality that $G$ is also solvable. We let $s$ be the \emph{derived length} of $H$, i.e.\ $s$ is such that $H^{(s+1)}=1$ but $H^{(s)}\neq 1$. Since $\varphi(G^{(s+1)})$ is contained in $H^{(s+1)}=\{1\}$, in analogy to \cref{sec:Habelian}, we assume without loss of generality that $G^{(s+1)}=1$. We remark that, if $s=1$, then $G$ and $H$ are abelian.

\begin{rmk}[Efficient communication and computation]\label{rmk:efficiency}
Alice and Bob,  as part of their message exchange, need to be able to communicate elements and perform operations in the groups efficiently. An often favourable approach (also proposed in \cite[\S~4.2]{Baumslag}) is that of using \emph{normal forms} of elements with respect to a \emph{polycyclic presentation}, %as mentioned for instance in 
% \cite[\S~4.2]{Baumslag} or 
cf.\ \cite[Ch.~2]{EickHab}. For instance, when working with $G$ and $H$ abelian, \ref{it:A3} and \ref{it:A4} holding for $G$ is almost the same as saying that $G$ is given by a polycyclic presentation and the expression $g=g_1^{r_1}\cdots g_m^{r_m}$ is the normal form of $g$ with respect to this presentation. The word ``almost'' in the previous sentence is there to stress that the $r_i$'s are not uniquely identified by their class modulo $|\langle g_i\rangle|$, which in turn is what happens for normal forms (see below).
\end{rmk}

We briefly explain here what it means for an element $g$ of the solvable group $G$ to be communicated in a \emph{normal form} with respect to a polycylic presentation respecting the derived filtration. To do so, for each $i\in\{1,\ldots,s\}$:
\begin{enumerate}[label=$(\alph*)$]
\item let $m_i$ denote the minimum number of generators of $G^{(i)}/G^{(i+1)}$,
\item let $g_{i1},\ldots, g_{im_i}$ be elements of $G^{(i)}$ such that 
\[
G^{(i)}/G^{(i+1)}=\gen{g_{i1}G^{(i+1)}}\oplus\cdots\oplus\gen{g_{im_i}G^{(i+1)}},
\]
% so that, in accordance with the previous point,  $g_{11}=g_1, \ldots, g_m=g_{1m}$,
\item for each $j\in\{1,\ldots,m_i\}$ let $o_{ij}$ denote the order of $g_{ij}$ modulo $G^{(i+1)}$.
\end{enumerate}
Then any element $g\in G$ can be uniquely represented by a vector 
$$\delta=(\delta_{11},\ldots,\delta_{1m_1},\delta_{21},\ldots,\delta_{2m_2},\ldots,\delta_{s1},\ldots,\delta_{sm_s})$$
of integers $0\leq \delta_{ij}<o_{ij}$ of length $n=m_1+\ldots+m_s$  such that 
\[
g=\prod_{i=1}^s\prod_{j=1}^{m_i}g_{ij}^{\delta_{ij}},
\]
which is precisely the normal form of $g$ with respect to the chosen generators $\{g_{ij}\}$. 

Note that the data $(a)$-$(b)$-$(c)$ mentioned above should be public for Alice and Bob to be able to share the elements with each other. If the chosen generators fit into a \emph{polycyclic sequence}, then the group operation is performed through the \emph{collection process} \cite[\S~2.2]{EickHab}.
It should be mentioned that, though polycyclic presentations generally yield a good practical performance, it is, to the best of our knowledge, not clear whether multiplication in these presentations can always be performed in polynomial time \cite{LGSo98}. 
Note that the expression of $h'$ also depends on the vector $\delta$:
\[
h'=\prod_{i=1}^s\prod_{j=1}^{m_i}(\varphi(g_{ij})h_{ij})^{\delta_{ij}}.
\]
Shor's algorithm, applied on each level $G^{(i)}/G^{(i+1)}$, is polynomial in the log of the size of this quotient. Moreover, Shor's algorithm really does determine the vector $\delta$ of $g$ because of the condition $0\leq \delta_{ij}<o_{ij}$. In particular
Eve can recover the vector $\delta$ of $g$ and use it to compute 
\[
\tau^{\beta}=\left(\prod_{i=1}^s\prod_{j=1}^{m_i}(\varphi(g_{ij})h_{ij})^{\delta_{ij}}\right)^{-1}\cdot h.
\]
The complexity of this algorithm on each quotient $G^{(i)}/G^{(i+1)}$ is polynomial in $\log|G^{(i)}/G^{(i+1)}|$, which makes the overall complexity to be polynomial in $\log |G|$. 
There is therefore no separation in complexity between the algorithm being run by the adversary, and the one being run by the user. 
This implies that for the system to be secure, we would need to have $|G| > 2^{2^{\lambda}}$, 
making it hard to say if it would even be possible to represent elements of $G$ in a computer.

\begin{rmk}\label{classicalsolvable}
Given that the quotients of consecutive elements of the derived series are abelian, it is natural to ask whether the classical attack we designed for abelian groups could be generalised to an attack in the solvable context. It seems, however, that the nonabelian solvable case is substantially different from the abelian one. Among the limitations are: 
\begin{itemize}
    \item $\tau$ could for instance belong to $H^{(s)}$ while $g_1,\ldots, g_m$ will typically live in $H^{(1)}\setminus H^{(2)}$ (in the general setting we are indeed not necessarily working with normal forms); 
    \item without knowing $\varphi$, it is not at all clear at which depths in the derived filtration $\varphi(g_1),\ldots\varphi(g_m)$ are to be found in $H$ (as publicly given only with noise);
    \item dealing with quotients is more delicate as one always has to consider normal closures of subgroups.
\end{itemize}
\end{rmk}

\section{Future work}\label{future-work}

In future work, we plan to consider several types of finite groups $G$, $H$, and $K$ for instantiating Leonardi--Ruiz-Lopez encryption and explore whether we can either construct attacks for the corresponding cryptosystems or prove security results. 

A first candidate would be the group $C_2^\lambda$ for all groups $G,H,K$. As none of the classical attacks presented in this paper apply in this case, 
Leonardi--Ruiz-Lopez encryption might prove to be classically secure for this choice. 
Other abelian candidates are the RSA groups and ideal class groups  mentioned in Section \ref{RSAetc}.

As a first nonabelian example, we plan to work with certain $p$-groups and use strategies that allow us to circumnavigate the attacks presented in Section \ref{sec:solvable}. In particular, it would be beneficial to work with presentations of groups that are not based on normal forms and yet allow efficient computation. 
The advantage of working with nonabelian groups is that it may be possible to construct a post-quantum additive homomorphic cryptosystem.

%%%%%%%%%%%%%%%%%%%%%%%%%%%%%%%%%%%%%%%%


\begin{thebibliography}{10}

\bibitem{NC}
L.~Babai, E.~Luks, and A.~Seress.
\newblock Permutation groups in {NC}.
\newblock In {\em Proc. 19th ACM Symp. on Theory of Computing}, pages 409--420,
  1987.

\bibitem{Baumslag}
G.~Baumslag, N.~Fazio, A.~R. Nicolosi, V.~Shpilrain, and W.~E. Skeith, III.
\newblock Generalized learning problems and applications to non-commutative
  cryptography (extended abstract).
\newblock In {\em Provable security}, volume 6980 of {\em Lecture Notes in
  Comput. Sci.}, pages 324--339. Springer, Heidelberg, 2011.

\bibitem{Benaloh}
J.~Benaloh.
\newblock Dense probabilistic encryption.
\newblock In {\em Proceedings of the Workshop on Selected Areas of
  Cryptography}, page 120–128, 1994.

\bibitem{BerVega14}
J.~Bermejo-Vega and M.~Van~den Nest.
\newblock Classical simulations of {A}belian-group normalizer circuits with
  intermediate measurements.
\newblock {\em Quantum Inf. Comput.}, 14(3-4):181--216, 2014.

\bibitem{Brent}
R.~P. Brent.
\newblock {\em Public Key Cryptography with a Group of Unknown Order}.
\newblock Tech. Rep. Oxford University, 2000.

\bibitem{TFHE}
I.~Chillotti, N.~Gama, M.~Georgieva, and M.~Izabachène.
\newblock Tfhe: Fast fully homomorphic encryptionover the torus.
\newblock {\em Journal of Cryptology}, 33:34--91, 2020.

\bibitem{practicalFHE}
K.~Cong, R.~C. Moreno, M.~B. da~Gama, W.~Dai, I.~Iliashenko, K.~Laine, and
  M.~Rosenberg.
\newblock Labeled {PSI} from homomorphic encryption with reduced computation
  and communication.
\newblock In {\em ACM-CCS 2021}, Lecture Notes in Computer Science, page
  1135–1150, 2021.

\bibitem{DGS}
S.~Dobson, S.~D. Galbraith, and B.~Smith.
\newblock Trustless unknown-order groups.
\newblock Cryptology ePrint Archive, Paper 2020/196, 2020.
\newblock \url{https://eprint.iacr.org/2020/196}.

\bibitem{FHEW}
L.~Ducas and D.~Micciancio.
\newblock Fhew: Bootstrapping homomorphic encryption in less than a second.
\newblock In {\em Advances in Cryptology - {EUROCRYPT 2015}}, Lecture Notes in
  Computer Science, pages 617--640, 2015.

\bibitem{EickHab}
B.~Eick.
\newblock Algorithms for polycyclic groups, 2000.

\bibitem{EHK04}
M.~Ettinger, P.~Hoyer, and E.~Knill.
\newblock The quantum query complexity of the hidden subgroup problem is
  polynomial.
\newblock {\em Inform. Process. Lett.}, 91(1):43--48, 2004.

\bibitem{FWW}
X.~Fan, T.~Wollinger, and Y.~Wang.
\newblock Inversion-free arithmetic on genus 3 hyperelliptic curves and its
  implementations.
\newblock {\em International Conference on Information Technology: Coding and
  Computing (ITCC'05) - Volume II, Las Vegas, NV, USA, 2005, pp. 642-647 Vol.
  1, doi: 10.1109/ITCC.2005.179.}, pages 642--647.

\bibitem{FonGal}
C.~Fontaine and F.~Galand.
\newblock A survey of homomorphic encryption for nonspecialists.
\newblock {\em EURASIP Journal on Information Security 2007}, (013801), 2007.

\bibitem{GSW}
C.~Gentry, A.~Sahai, and B.~Waters.
\newblock Homomorphic encryption from learning with errors:
  Conceptually-simpler, asymptotically-faster, attribute-based.
\newblock In {\em Advances in Cryptology - Crypto 2013}, Lecture Notes in
  Computer Science, pages 75--92, 2013.

\bibitem{Isaacs}
I.~M. Isaacs.
\newblock {\em Finite group theory}, volume~92 of {\em Graduate Studies in
  Mathematics}.
\newblock American Mathematical Society, Providence, RI, 2008.

\bibitem{quantumtorsionfree}
S.~J.~L. Jr. and L.~H. Kauffman.
\newblock Quantum hidden subgroup algorithms: a mathematical perspective.
\newblock In {\em Quantum Computation and Information}, volume 305 of {\em AMS
  Contemporary Mathematics}, pages 139--202. 2002.

\bibitem{Kitaev}
A.~Y. Kitaev.
\newblock Quantum computations: algorithms and error correction.
\newblock {\em Uspekhi Mat. Nauk}, 52(6(318)):53--112, 1997.

\bibitem{Knuth}
D.~E. Knuth.
\newblock {\em The art of computer programming. Volume 1, Fundamental
  Algorithms}.
\newblock 1997.

\bibitem{LGSo98}
C.~R. Leedham-Green and L.~H. Soicher.
\newblock Symbolic collection using deep thought.
\newblock {\em LMS J. Comput. Math.}, 1:9--24, 1998.

\bibitem{Eprint}
C.~Leonardi and L.~Ruiz-Lopez.
\newblock Homomorphism learning problems and its applications to public-key
  cryptography.
\newblock Cryptology ePrint Archive, Report 2019/717, 2019.
\newblock \url{https://ia.cr/2019/717}.

\bibitem{Luks}
E.~M. Luks.
\newblock Computing in solvable matrix groups.
\newblock In {\em 33rd annual symposium on Foundations of computer science
  (FOCS). Proceedings, Pittsburgh, PA, USA, October 24--27, 1992}, pages
  111--120. Washington, DC: IEEE Computer Society Press, 1992.

\bibitem{LPR}
V.~Lyubashevsky, C.~Peikert, and O.~Regev.
\newblock On ideal lattices and learning with errors over rings.
\newblock {\em Journal of the ACM (JACM)}, 60(6):1--35, 2013.

\bibitem{AbNC}
P.~McKenzie and S.~A. Cook.
\newblock The parallel complexity of abelian permutation group problems.
\newblock {\em SIAM J. Comput.}, 16(5):880--909, 1987.

\bibitem{Paillier}
P.~Paillier.
\newblock Public-key cryptosystems based on composite degree residuosity
  classes.
\newblock In {\em International Conference on the Theory and Application of
  Cryptographic Techniques ({EUROCRYPT}'99)}, volume 1592, pages 223--238,
  1999.

\bibitem{Regev}
O.~Regev.
\newblock On lattices, learning with errors, random linear codes, and
  cryptography.
\newblock {\em Journal of the ACM (JACM)}, 56(6):1--40, 2009.

\bibitem{Shor_1}
P.~W. Shor.
\newblock Algorithms for quantum computation: discrete logarithms and
  factoring.
\newblock In {\em 35th {A}nnual {S}ymposium on {F}oundations of {C}omputer
  {S}cience ({S}anta {F}e, {NM}, 1994)}, pages 124--134. IEEE Comput. Soc.
  Press, Los Alamitos, CA, 1994.

\bibitem{Shor_2}
P.~W. Shor.
\newblock Polynomial-time algorithms for prime factorization and discrete
  logarithms on a quantum computer.
\newblock {\em SIAM J. Comput.}, 26(5):1484--1509, 1997.

\bibitem{Simon}
D.~R. Simon.
\newblock On the power of quantum computation.
\newblock {\em SIAM J. Comput.}, 26(5):1474--1483, 1997.

\bibitem{SuthClass}
A.~V. Sutherland.
\newblock Order computations in generic groups, 2007.
\newblock PhD Thesis, Massachusetts Institute of Technology.
  \url{https://math.mit.edu/~drew/sutherland-phd.pdf}.

\bibitem{Sutherland}
A.~V. Sutherland.
\newblock Structure computation and discrete logarithms in finite abelian
  {$p$}-groups.
\newblock {\em Math. Comp.}, 80(273):477--500, 2011.

\end{thebibliography}
\end{document}